\keywords{Games on graphs, Memory}
\title{Playing Safe, Ten Years Later}
\author[T. Colcombet]{Thomas Colcombet\lmcsorcid{0000-0001-6529-6963}}[a]
\address{CNRS, IRIF, Universit\'e de Paris, Paris, France}
\author[N. Fijalkow]{Nathana{\"e}l Fijalkow\lmcsorcid{0000-0002-6576-4680}}[b]
\address{CNRS, LaBRI, Bordeaux, France and University of Warsaw, Warsaw, Poland}
\author[F. Horn]{Florian Horn\lmcsorcid{0000-0001-8872-4705}}[a]
\tikzset{>=stealth, shorten >=1pt}
\tikzset{every edge/.style = {thick, ->, draw}}
\tikzset{Eve/.style = {shape = ellipse, draw, thick, minimum size = .6cm, fill=gray!30!white}}
\tikzset{Adam/.style = {shape = rectangle, minimum size =.6cm, draw, fill=gray!30!white}}
\newcommand{\set}[1]{\{ #1 \}}
\newcommand{\Safe}{\mathrm{Safe}}
\newcommand{\Res}{\mathrm{Res}}
\renewcommand{\next}{\nu}
\newcommand{\mem}{\mathrm{mem}}
\newcommand{\memFD}{\mathrm{mem}_{\textrm{fin-deg}}}
\newcommand{\M}{\mathcal{M}}
\newcommand{\up}{\mu}
\newcommand{\N}{\mathbb{N}}
\newcommand{\VE}{V_\exists}
\newcommand{\VA}{V_\forall}
\newcommand{\W}{\mathcal{W}}
\newcommand{\A}{\mathcal{A}}
\newcommand{\WE}{\W_{E}}
\newcommand{\WA}{\W_{A}}
\newcommand{\G}{\mathcal{G}}
\theoremstyle{thmC}
\newtheorem{corC}[thm]{Corollary}
\begin{document}

\begin{abstract}
We consider two-player games over graphs and give tight bounds on the memory size of strategies ensuring safety objectives. More specifically, we show that the minimal number of memory states of a strategy ensuring a safety objective is given by the size of the maximal antichain of left quotients with respect to language inclusion. This result holds for all safety objectives without any regularity assumptions. We give several applications of this general principle. In particular, we characterize the exact memory requirements for the opponent in generalized reachability games, and we prove the existence of positional strategies in games with counters.
\end{abstract}

\maketitle
\section{Introduction}
Graphs games provide a mathematical framework to model reactive systems (we refer to~\cite{GTW02-Dagstuhl} for a survey on the topic, and to~\cite{GamesBook} for a recent textbook).
We focus here on the Synthesis Problem to motivate the problem we consider, which is to characterize the amount of memory required in games with safety objectives.

\medskip\noindent{\bf The synthesis problem.}
The inputs of the synthesis problem are a \textit{system} and a \textit{specification}.
The expected output is a \textit{controller} for the system, that ensures the specification.
We describe here a classical and well-studied approach to solve the synthesis problem through game theory.
We model the system as a graph, whose vertices represent states and edges represent transitions.
Its evolution consists in interactions between a controller and an environment,
which is turned into a game on the graph between two players, Eve and Adam.
If in a given state, the controller can choose the evolution of the system,
then the corresponding vertex is controlled by Eve.
If the system evolves in an uncertain way, we consider the worst-case scenario,
where Adam controls those states.
A pebble is initially placed on the vertex representing 
the initial state of the system, 
then Eve and Adam move this pebble along the edges.
The sequence built describes a run of the system: 
Eve tries to ensure that it satisfies the specification.
So, in order to synthesize a controller, we are interested in 
whether Eve can ensure this objective and what resources she needs.
In particular, the most salient question is: what is the size of a \textit{minimal} controller
satisfying the specification?
Since a controller is here given by a strategy for Eve, this is equivalent to the following question:
what is the minimal amount of memory used by a winning strategy?
The following diagram shows the correspondence between the notions from the synthesis problem (left-hand side) and the game-theoretic notions (right-hand side).
$$
\underbrace{\mathbb{S}}_{\textrm{system}},\underbrace{\mathcal{C}}_{\textrm{controller}} 
\models \underbrace{\Phi}_{\textrm{specification}}
\qquad \Longleftrightarrow \qquad 
\underbrace{\A}_{\textrm{arena (graph)}},\underbrace{\sigma}_{\textrm{strategy}} 
\models \underbrace{W}_{\textrm{objective}}
$$

\medskip\noindent{\bf Safety specifications.}
Since we consider non-terminating sequences, a specification is given by a language of infinite words.
Of special interest are the specifications asserting that ``nothing bad'' ever happens; such specifications are called \textit{safety specifications}.
A safety specification is induced by a (possibly infinite and non-regular) set of bad prefixes $P$,
and the specification is met by a run if no prefixes belong to $P$.
Although quite simple, the safety specifications proved useful in both theory and practice, and are actively studied (we refer to~\cite{Kupferman14} for a survey).

\medskip\noindent{\bf Our contribution.}
In this paper, our goal is to characterize the memory requirements of (arbitrary) safety winning objectives.
The reader with a background in game theory may be surprised, as it is well known that ``safety games are positionally determined'', implying that the quantity above is constant equal to one.
The subtlety is that our setting is (much) more general than the classical notion of safety games: 
in a safety game, the goal is to avoid a set of forbidden edges, while a safety objective defines a set of forbidden prefixes independent of the arena.
We show the following general principle:
\begin{framed}
\centering
For a safety winning objective $W$, the minimal number of memory states of a winning strategy
is exactly the cardinal of the maximal antichain of left quotients of $W$.
\end{framed}
We refer to Section~\ref{sec:safety} for the missing definitions.
Note that this result holds for all safety winning objectives, without any regularity assumption.
The characterisation above holds for graphs with finite degree, however we can state and prove a variant lifting this assumption, but assuming that the set of left quotients is well founded with respect to inclusion.
We state our main results in Section~\ref{sec:statements}, and prove them in Section~\ref{sec:safety}.
We give several examples and applications in Section~\ref{sec:applications}.
For instance, it allows to characterize the memory requirements for the opponent in generalized reachability games,
and to prove the existence of positional strategies in games with counters.
This journal version additionally discusses related works in Section~\ref{sec:related_works}, many of them posterior to the publication of the conference version of this article~\cite{ColcombetFH14} in 2014.

\section{Definitions}
\label{sec:defs}
The games we consider are played on an \emph{arena} $\A = (V,(\VE,\VA),E, c)$,
consisting of a (finite or infinite) graph $(V,E)$, a partition $(\VE,\VA)$ of the vertex set $V$: 
a vertex in $\VE$ belongs to Eve and in $\VA$ to Adam, and a coloring function $c : E \to A$ mapping edges to a color from a finite alphabet $A$.
When drawing arenas, we will use circles for vertices owned by Eve and squares for those owned by Adam.
Throughout this paper, we make the cosmetic assumption that graphs do not have dead-ends: for every vertex $v \in V$, there exists an edge $(v,v') \in E$.

\vskip1em
\textbf{Game.} A \emph{play} $\pi$ is an infinite word of edges $e_0 \cdot e_1 \cdots$ that are consecutive: for all $i$, 
$e_i = (\_,v) \in E$ and $e_{i+1} = (v,\_) \in E$ for some $v \in V$.
The prefix of length $k$ of $\pi$ is denoted $\pi_k$.
A play $\pi$ induces an infinite sequence of colors $c(\pi)$, obtaining by applying the coloring function $c$ component-wise.
We define \emph{winning objectives} for a player by giving a set of infinite sequences of colors $W \subseteq A^\omega$. 
As we are interested in zero-sum games, \textit{i.e.} where the winning objectives of the two players are opposite, 
if the winning objective for Eve is~$W$, then the winning objective for Adam is $A^\omega \setminus W$.
A \emph{game} is a couple $\G = (\A,W)$ where $\A$ is an arena and $W$ a winning objective.

\vskip1em
\textbf{Strategy.} A \emph{strategy} for a player is a function that prescribes, given a finite history of the play, the next move. 
Formally, a strategy for Eve is a function $\sigma : E^* \cdot \VE \to E$ such that for all $\pi \in E^*$ and $v \in \VE$ we have $\sigma(\pi \cdot v) = (v,\_) \in E$.
Strategies for Adam are defined similarly, and usually denoted $\tau$.
Once a game $\G = (\A,W)$, a starting vertex $v_0$ and strategies $\sigma$ for Eve and $\tau$ for Adam are fixed, 
there is a unique play $\pi(v_0,\sigma,\tau)$, which is said winning for Eve if its image by $c$ belongs to $W$.

A strategy $\sigma$ for Eve is \emph{winning} if for all strategies $\tau$ for Adam, $\pi(q_0,\sigma,\tau)$ is winning.
We say that Eve wins the game $\G$ from $v_0$ if she has a winning strategy from $v_0$, 
and denote $\WE(\G)$ the set of vertices from where Eve wins;
we often say that $v \in \WE(\G)$ is winning.
We define similarly $\WA(\G)$ for Adam to be the set of vertices from where Adam wins.

\vskip1em
\textbf{Memory.} A \emph{memory structure} is a deterministic state machine that reads the sequence
of edges and abstracts its relevant informations into a memory state.
Formally, a memory structure $\M = (M, m_0, \up)$ for an arena consists of a set $M$ of memory states, 
an initial memory state $m_0 \in M$ and an update function $\up: M \times E \to M$.
The update function takes as input the current memory state and the chosen edge to compute the next memory state.
It can be extended to a function $\up^*: E^* \to M$ by defining $\up^*(\varepsilon) = m_0$ and $\up^* (\pi \cdot e) = \up(\up^*(\pi), e)$.
Given a memory structure $\M$ and a next-move function $\next: \VE \times M \to E$, we can define a strategy $\sigma$ for Eve by $\sigma(\pi \cdot v) = \next(v, \up^*(\pi \cdot v))$.
A strategy with memory structure $\M$ has finite memory if $M$ is a finite set, and we write $|\M|$ for the size of $M$.
It is \emph{memoryless}, or \emph{positional} if $M$ is a singleton: it only depends on the current vertex. Hence a memoryless strategy can be described as a function $\sigma: \VE \to E$.

\vskip1em
An arena and a memory structure induce an expanded arena where the current memory state is computed online.
Formally, the arena $\A = (V, (\VE,\VA), E, c)$, the memory structure $\M$ for $\A$ 
and a new coloring function $c' : E \times M \to A$ induce an expanded arena 
$\A \times \M = (V \times M, (\VE \times M, \VA \times M), E \times \up, c')$, where
$E \times \up$ is defined by: $((v,m), (v',m')) \in E \times \up$ if $(v,v') \in E$ and $\up(m,(v,v')) = m'$.

From a memoryless strategy in $\A \times \M$, we can build a strategy in $\A$ using $\M$ as memory structure, which behaves as the original strategy.
This key observation will be used several times in the paper.

\section{Statements of the Results}
\label{sec:statements}

In this section, we consider a safety objective\footnote{To be defined in this section.} $W$ and compute the following quantity:
$$\mem(W)\ \doteq\ \sup_{\begin{subarray}{c}\G = (\A,W) \textrm{ game}\\ v_0 \textrm{ initial vertex}\end{subarray}}\ 
\inf_{\begin{subarray}{c}\sigma \textrm{ winning strategy from $v_0$}\\ \textrm{with memory structure $\M$}\end{subarray}}\ |\M|\ .$$
In words, $\mem(W)$ is the necessary and sufficient number of memory states for constructing
a winning strategy in games with objective $W$.
Equivalently:
\begin{itemize}
	\item \textit{upper bound:} for all games $\G = (\A,W)$, if Eve has a winning strategy from an initial vertex $v_0$,
	then she has a winning strategy using at most $\mem(W)$ memory states,
	\item \textit{lower bound:} there exists a game $\G = (\A,W)$ and an initial vertex $v_0$ where Eve has a winning strategy,
	but no winning strategy using less than $\mem(W)$ memory states.
\end{itemize}
Note that we place no restrictions on the size of the games involved: they may be infinite.
We explain now why our study of games with safety objectives is (much) more general than the classical notion of safety games.

Consider an arena $\A$: a safety \textit{condition} is given by a subset $B \subseteq E$ of forbidden edges, 
inducing the winning condition 
$$\set{\rho = a_0 \cdot a_1 \cdots \in E^\omega \mid \textrm{for all } i,\ a_i \notin B}\ .$$

Let $A$ a set of colours. A safety \textit{objective} is given by a subset $P \subseteq A^*$ of forbidden prefixes of colors,
inducing the winning objective 
$$\Safe(P) = \set{\rho = a_0 \cdot a_1 \cdots \in A^\omega \mid \textrm{for all } i,\ a_0 \cdot a_1 \cdots a_i \notin P}\ .$$
For each arena $\A$ equipped with a colouring function $c : E \to A$, this induces the winning condition
$$\set{\rho = a_0 \cdot a_1 \cdots \in E^\omega \mid c(\rho) = c(a_0) \cdot c(a_1) \cdots \in \Safe(P)}.$$

\vskip1em
In other words, a condition is defined directly on an arena, while an objective is defined independently of the arena on the sequence of colours and induces a condition for each arena equipped with a colouring function.
As an example, consider the safety objective $\Safe(P)$ over the colours $A = \set{a,b,c}$ 
where $P = A^* \cdot a \cdot A^* \cdot b \cup A^* \cdot b \cdot A^* \cdot a$: the objective $\Safe(P)$ specifies that $a$ and $b$ cannot be both seen along a play.
This cannot be expressed by a safety condition, and $\mem(\Safe(P)) \ge 2$, as shown in Figure~\ref{fig:example_memory}.

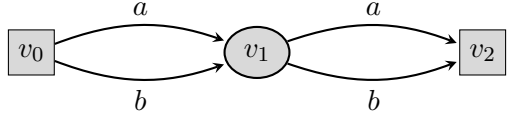
\begin{figure}
\begin{center}
\begin{tikzpicture}[scale = 1]
\begin{scope}
\node[Adam]  at (0,2) (v0)	{$v_0$};
\node[Eve] at (3,2)	 (v1)	{$v_1$};
\node[Adam]  at (6,2) (v2)	{$v_2$};
\end{scope}

\path[thick]
(v0) edge[bend left = 20] node[above] {$a$} (v1)
(v0) edge[bend right = 20] node[below] {$b$} (v1) 

(v1) edge[bend left = 20] node[above] {$a$} (v2)
(v1) edge[bend right = 20] node[below] {$b$} (v2) 
;
\end{tikzpicture}
\end{center}
\caption{Memory is necessary for safety objectives: to avoid seeing both $a$ and $b$, Eve must choose the same letter as Adam did in the first move, which requires two memory states.}
\label{fig:example_memory}
\end{figure}

\begin{lem}[Folklore]\label{lem:memoryless_internal_safety}
Let $\G$ be a game with a safety condition, and $v_0$ an initial vertex.
If Eve has a winning strategy, then she has a positional winning strategy.
\end{lem}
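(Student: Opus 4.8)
The plan is to show that Eve's winning region $\WE(\G)$ is ``self-sustaining'' for the safety condition, and that a positional strategy respecting it wins from everywhere inside it (in particular from $v_0$). Write $W = \WE(\G)$ for the set of winning vertices and $B \subseteq E$ for the set of forbidden edges. The whole argument rests on two closure properties of $W$, which I would establish first by manipulating residual strategies rather than invoking determinacy, so that the proof remains valid on arbitrary, possibly infinite, arenas.

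First I would prove the \emph{Eve closure}: for every $v \in \VE \cap W$ there is an edge $(v,v') \notin B$ with $v' \in W$. Fixing a winning strategy $\sigma_v$ from $v$, its first move $\sigma_v(v) = (v,v')$ cannot be forbidden, for otherwise every consistent play would already use a forbidden edge and $\sigma_v$ would not be winning; moreover the residual of $\sigma_v$ after this first edge is a winning strategy from $v'$, so $v' \in W$. Symmetrically I would prove the \emph{Adam closure}: for every $v \in \VA \cap W$, \emph{every} outgoing edge $(v,v')$ satisfies $(v,v') \notin B$ and $v' \in W$. Here I argue by contradiction: if some edge $(v,v')$ were forbidden, Adam playing it first defeats every Eve strategy immediately; and if $(v,v') \notin B$ but $v' \notin W$, then from $v'$ any residual Eve strategy is beaten by some Adam strategy, which I prepend with the edge $(v,v')$ to beat an arbitrary Eve strategy from $v$ — in both cases contradicting $v \in W$.

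With these two properties in hand I would define the positional strategy $\sigma \colon \VE \to E$ by choosing, for each $v \in \VE \cap W$, an edge $\sigma(v)$ as guaranteed by the Eve closure (and arbitrarily elsewhere, since those vertices will never be reached from $W$). Then I would prove by induction on the length of the prefix that every play $\pi = e_0 \cdot e_1 \cdots$ consistent with $\sigma$ and starting in $v_0 \in W$ stays inside $W$ and never uses a forbidden edge: at an Eve vertex the edge chosen by $\sigma$ is non-forbidden and leads back into $W$ by construction, and at an Adam vertex any edge Adam plays is non-forbidden and leads into $W$ by the Adam closure. Since no edge of $\pi$ lies in $B$, the play is winning, so $\sigma$ is a positional winning strategy from $v_0$.

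The only genuinely delicate point is the bookkeeping around residual strategies in the two closure properties, and in particular resisting the temptation to shortcut the Adam case by asserting that ``$v' \notin W$ implies Adam wins from $v'$''. I deliberately avoid any determinacy assumption and instead extract, from the mere failure of each candidate Eve strategy at $v'$, a matching Adam strategy that I recombine into a refutation at $v$. Everything else is a routine induction.
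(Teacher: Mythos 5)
The paper states this lemma as folklore and offers no proof, so there is no in-paper argument to compare against; your proof is the standard one and it is correct. The two closure properties are exactly what is needed: the Eve closure is established soundly via the residual of a winning strategy after its first move (the first edge cannot lie in $B$, and the shifted strategy wins from the successor), and the Adam closure correctly derives a contradiction by taking an \emph{arbitrary} Eve strategy at $v$, residuating it past the edge $(v,v')$, and recombining the Adam strategy that beats the residual. The concluding induction that every $\sigma$-consistent play from $v_0$ stays in $\WE(\G)$ and never crosses $B$ is routine and uses the paper's no-dead-end assumption only to guarantee that plays are infinite. Your deliberate avoidance of determinacy is a legitimate point of care: the shortcut ``$v' \notin \WE(\G)$ implies Adam wins from $v'$'' would require Gale--Stewart determinacy of closed games (true, but a heavier tool), whereas your direct extraction of a refuting Adam strategy from the failure of each candidate Eve strategy is elementary and works on arbitrary infinite arenas, which is the generality the paper needs since it applies this lemma to the (possibly infinite) product arenas $\A \times \M$. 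No gaps.
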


Safety objectives form a very expressive class of objectives, as we will demonstrate in Section~\ref{sec:applications}.
The notion of safety objective originates from topological studies of the set of infinite words:
the safety objectives are the closed sets for the Cantor topology, denoted $\Sigma_1$ in the corresponding Borel hierarchy.

\vskip1em
\noindent Let us fix a safety objective $W = \Safe(P)$ induced by $P \subseteq A^*$.
Let $w \in A^*$, define its left quotient as:
$$w^{-1} W = \set{\rho \in A^\omega \mid w \cdot \rho \in W}.$$
We denote $\Res(W)$ the set of left quotients of $W$.
We mention some special left quotients: the initial one, $\varepsilon^{-1} W$ (equal to $W$), and the empty one, obtained as $w^{-1} W$ for any $w \in P$.
As a small abuse of notations, from now on by ``left quotient'' we mean ``non-empty left quotient of $W$''.

From a left quotient $w^{-1} W$ and a letter $a \in A$, we define $(w^{-1} W) \cdot a$
as $(w \cdot a)^{-1} W$: it is easy to check that this is well defined (independent of the representant $w$ chosen).
Recall that $\Res(W)$ is finite if and only if $W$ is regular, and in such case it can be used to describe the set of states of the minimal deterministic automaton recognizing $W$.

\vskip1em
Before stating our main results, we need some order-theoretic definitions.
The width of a partially ordered set $(X,\le)$ is the cardinal of a maximal antichain of $X$ with respect to $\le$,
\textit{i.e.} the cardinal of a maximal set of pairwise incomparable elements.
Recall that an antichan is a subset of $X$ such that any two distinct elements are incomparable, and a chain is a subset of $X$ such that any two elements are comparable.
We say that $(X,\le)$ is well founded if every chain contains a minimal element.

We write $\memFD(W)$ for the quantity defined as $\mem(W)$ but restricting to arenas of finite degree: for every vertex, there are finitely many outgoing edges.

\begin{thm}\label{thm:main}
For all safety objectives $W$,
\begin{itemize}
	\item $\memFD(W)$ is the width of $(\Res(W),\subseteq)$.
	\item if $(\Res(W),\subseteq)$ is well-founded (in particular if $W$ is regular), then $\mem(W)$ is the width of $(\Res(W),\subseteq)$.
\end{itemize}
\end{thm}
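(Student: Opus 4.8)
The plan is to prove matching upper and lower bounds, where the claimed quantity is $\kappa = \mathrm{width}(\Res(W),\subseteq)$, the cardinal of a maximal antichain of (non-empty) left quotients ordered by inclusion. For the lower bound (showing $\memFD(W)\ge\kappa$ and a fortiori $\mem(W)\ge\kappa$), I would fix an antichain $\{u_1^{-1}W,\dots,u_k^{-1}W\}$ of pairwise incomparable left quotients and construct an arena witnessing that $k$ memory states are needed. The idea is to let Adam first select one of the finitely many representatives $u_i$ (reading its colours), reaching a common vertex $v$ owned by Eve, from which the remaining play must satisfy $u_i^{-1}W$. Because the quotients are pairwise incomparable, for any two indices $i\neq j$ there is a continuation $\rho$ that lies in $u_i^{-1}W$ but not in $u_j^{-1}W$; hence Eve's response at $v$ must depend on which $u_i$ Adam chose, and a strategy collapsing two of these histories to the same memory state would lose. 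A counting/pigeonhole argument then shows any winning strategy needs at least $k$ memory states, and taking the supremum over antichains gives the bound. Note this construction uses only finitely many edges at each vertex, so it is a finite-degree arena.

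For the upper bound I would show that in \emph{any} game $\G=(\A,W)$ with $\A$ of finite degree, if Eve wins from $v_0$ then she wins using a memory structure of size $\kappa$. The natural memory structure is $\M=(\Res(W),W,\up)$ where $\up(w^{-1}W,e)=(w^{-1}W)\cdot c(e)$ — this is well defined by the observation in the excerpt that $(w^{-1}W)\cdot a=(w\cdot a)^{-1}W$ is independent of the representative. This structure tracks exactly the current residual objective that remains to be satisfied. The key point is that this ``residual-tracking'' memory is \emph{sufficient}: I would argue that the set of winning plays from $v_0$ depends on a history $\pi$ only through $c(\pi)^{-1}W$, so after expanding to $\A\times\M$ the induced objective becomes a safety \emph{condition} (avoid the forbidden edges that would push the memory into the empty quotient). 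By Lemma~\ref{lem:memoryless_internal_safety}, safety conditions admit positional winning strategies on the expanded arena, and the observation at the end of Section~\ref{sec:defs} converts a positional strategy in $\A\times\M$ into a strategy in $\A$ using $\M$ as memory. This yields a winning strategy whose memory is $\Res(W)$ — but that may be infinite, so the real work is to compress it down to $\kappa$ states.

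The compression is the crux, and I expect it to be the main obstacle. The point is that along any single play the sequence of visited left quotients is \emph{decreasing} with respect to $\subseteq$: if $w$ is a prefix of $w'$ then $(w')^{-1}W\subseteq w^{-1}W$, since any safe continuation of $w'$ extends to a safe continuation of $w$. Hence the memory states actually reachable and relevant along winning plays form chains, and by Dilworth/Mirsky-type reasoning the essential information can be encoded using only as many states as the width $\kappa$ of $(\Res(W),\subseteq)$. Concretely I would identify memory states that are ``equivalent for Eve's purposes'' — two quotients that are comparable can be merged by always remembering only a representative from each of $\kappa$ chains covering $\Res(W)$ — and verify that the merged next-move function remains winning. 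This is where finite degree is used in an essential way, via a K\H{o}nig-type argument ensuring that the relevant quotients reachable under a winning strategy behave well; without it one only controls the situation under the extra hypothesis that $(\Res(W),\subseteq)$ is well founded, which is precisely the second bullet of the theorem. For that second part, the same scheme works because well-foundedness of the order lets the compression argument go through even when degrees are infinite (König's lemma being replaced by well-founded induction along the decreasing chains of quotients), so the identical bound $\kappa$ is obtained for $\mem(W)$.
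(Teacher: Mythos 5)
Your lower bound and the first step of your upper bound (tracking the full set $\Res(W)$ of left quotients, turning the objective into a safety \emph{condition} on $\A\times\M$, and applying Lemma~\ref{lem:memoryless_internal_safety}) match the paper's argument. The gap is in the compression from $|\Res(W)|$ down to the width, which is indeed the crux, and your route into it starts from a false premise: it is \emph{not} true that the left quotients visited along a play decrease with respect to inclusion. If $w$ is a prefix of $w' = w\cdot u$, a safe continuation $\rho$ of $w'$ witnesses that $u\cdot\rho$ is a safe continuation of $w$, not that $\rho$ is; in general $(w')^{-1}W \not\subseteq w^{-1}W$. Concretely, already for $P=\set{b}$ over $A=\set{a,b}$ one has $a^{-1}W = A^\omega \supsetneq W = \varepsilon^{-1}W$, and for the energy objective of Section~\ref{sec:applications} reading a reload letter strictly \emph{increases} the current quotient. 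So the quotients reachable along a play need not form a chain, and the Mirsky/Dilworth-style merging you propose cannot be applied to the trajectory of quotients.

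The second, related, problem is that merging comparable quotients ``by remembering a representative of each of $\kappa$ covering chains'' is not a well-defined memory structure: the image of a chain under the action of a letter need not stay inside a single chain, and the next-move function attached to a merged state must be simultaneously winning for every quotient collapsed into it, which fails if the remembered representative is too large. What makes the paper's proof work is that the semantics of a memory state is \emph{vertex-dependent}: for each vertex $v$, one decomposes the set $\W(v)$ of quotients $L$ such that $(v,L)$ is winning in $\G\times\M$ into at most $K$ objects --- minimal elements of $\W(v)$ when they exist (the regular and well-founded cases), or maximal decreasing chains covering $\W(v)$ via Dilworth's theorem (the finite-degree case) --- and maintains the invariant that the tracked object \emph{under-approximates} the true quotient, i.e.\ some tracked $L$ satisfies $L \subseteq c(\pi_k)^{-1}W$, which in particular forces $c(\pi_k)^{-1}W \neq \emptyset$. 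Finite degree is then used exactly once: at an Eve vertex carrying an infinite decreasing chain, some outgoing edge is good for infinitely many elements of the chain by pigeonhole, hence for all of them since ``being good'' is upward closed. Your proposal gestures at these ingredients but supplies neither the per-vertex decomposition nor the under-approximation invariant, and without them the compression step does not go through.
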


We present in~\ref{sec:applications} an example, called the outbidding games, showing that the well-founded assumption is necessary.

An objective $W$ is half-positional if $\mem(W) = 1$.
In the case of safety objectives, we obtain the following characterization:

\begin{cor}
For all safety objectives $W$, $W$ is half-positional over arenas of finite degree if and only if the inclusion is a linear order over $\Res(W)$.

If furthermore $(\Res(W),\subseteq)$ is well-founded, then $W$ is half-positional (over all arenas) if and only if the inclusion is a linear order over $\Res(W)$.
\end{cor}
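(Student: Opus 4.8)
The plan is to obtain both equivalences as immediate consequences of Theorem~\ref{thm:main}, reducing the statement to a purely order-theoretic observation about the width of a poset. Unwinding the definition of half-positionality ($\mem(W) = 1$), the phrase ``$W$ is half-positional over arenas of finite degree'' should be read as $\memFD(W) = 1$, while ``$W$ is half-positional (over all arenas)'' means $\mem(W) = 1$. By Theorem~\ref{thm:main}, $\memFD(W)$ equals the width of $(\Res(W),\subseteq)$, and, under the additional hypothesis that $(\Res(W),\subseteq)$ is well-founded, so does $\mem(W)$. Hence in both cases the half-positionality assertion is equivalent to saying that the poset $(\Res(W),\subseteq)$ has width exactly $1$.

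First I would record the elementary fact that a nonempty partially ordered set $(X,\le)$ has width $1$ if and only if $\le$ is a linear order. Indeed, an antichain of cardinal at least $2$ is by definition a pair of distinct incomparable elements; so the width is $1$ precisely when no such pair exists, i.e.\ when any two elements of $X$ are comparable, which is exactly the definition of a linear order. This fact applies to $X = \Res(W)$ provided $\Res(W)$ is nonempty, which holds as soon as $W \neq \emptyset$, since then $W = \varepsilon^{-1} W$ is a (nonempty) left quotient and hence belongs to $\Res(W)$. I would make the harmless standing assumption that $W$ is nonempty, so that $\Res(W) \neq \emptyset$ and the width is at least $1$.

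Combining the two observations then finishes the proof: for the finite-degree statement, $\memFD(W) = 1$ iff the width of $(\Res(W),\subseteq)$ equals $1$ iff $\subseteq$ is a linear order on $\Res(W)$; and for the second statement, under the well-foundedness hypothesis, $\mem(W) = 1$ iff the width equals $1$ iff $\subseteq$ is a linear order on $\Res(W)$. I do not expect any genuine obstacle, as the corollary is a direct specialisation of Theorem~\ref{thm:main} to the value $1$; the only points requiring a little care are the translation between ``width $1$'' and ``linear order'' and the implicit nonemptiness of $\Res(W)$, which rules out the degenerate case $W = \emptyset$ where the poset is empty and the equivalence would otherwise read vacuously.
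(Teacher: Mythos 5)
Your proposal is correct and matches the paper's (implicit) argument: the corollary is stated as a direct specialisation of Theorem~\ref{thm:main} to the value $1$, using exactly the observation that a nonempty poset has width $1$ iff it is linearly ordered. Your extra care about the nonemptiness of $\Res(W)$ is a harmless refinement of what the paper leaves unsaid.
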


\section{Proofs}
\label{sec:safety}
\subsection*{A First Upper Bound}

\begin{lem}\label{lem:upper_bound_trivial}
For all safety objectives $W = \Safe(P)$, for all games $\G = (\A,W)$, if Eve has a winning strategy from an initial vertex $v_0$,
then she has a winning strategy using at most $|\Res(W)|$ memory states.

Consequently, $$\memFD(W)\ \le\ \mem(W)\ \le\ |\Res(W)|\ .$$
\end{lem}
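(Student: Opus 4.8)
The plan is to use the left quotients of $W$ themselves as a memory structure, and then reduce to the folklore positional-determinacy result for internal safety conditions (Lemma~\ref{lem:memoryless_internal_safety}) on the resulting expanded arena. I would first assume $W \neq \emptyset$ (otherwise Eve wins from no vertex and the statement is vacuous), so that $m_0 \doteq \varepsilon^{-1}W = W$ lies in $\Res(W)$. Define a memory structure $\M$ whose states are the left quotients together with a sink $\bot$ standing for the empty quotient: take $M = \Res(W) \cup \set{\bot}$, initial state $m_0$, and update $\up(w^{-1}W, e) = (w^{-1}W)\cdot c(e) = (w\cdot c(e))^{-1}W$, routed to $\bot$ exactly when this quotient is empty, with $\bot$ absorbing. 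This is well defined since $(w^{-1}W)\cdot a$ does not depend on the chosen representative $w$.

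The heart of the argument is the following correspondence. For a play $\pi$ with colour sequence $a_0 a_1 \cdots$, the memory reached after the prefix $a_0 \cdots a_i$ is $(a_0 \cdots a_i)^{-1}W$, and I claim that $c(\pi) \in W$ if and only if this quotient is non-empty for every $i$, that is, the run of $\M$ never reaches $\bot$. Indeed, $(a_0 \cdots a_i)^{-1}W \neq \emptyset$ forces $a_0 \cdots a_i \notin P$, since a non-empty quotient witnesses a safe continuation whose prefix $a_0 \cdots a_i$ cannot be forbidden; conversely, if $c(\pi) \in \Safe(P)$ then the tail $a_{i+1} a_{i+2} \cdots$ itself belongs to $(a_0 \cdots a_i)^{-1}W$. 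Consequently, in the expanded arena $\A \times \M$ the safety objective $W$ becomes an internal safety condition $B$, whose forbidden edges are exactly those whose update component enters $\bot$.

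Since $\M$ is deterministic, plays from $v_0$ in $\A$ are in bijection with plays from $(v_0, m_0)$ in $\A \times \M$, and by the correspondence above this bijection matches winning plays for $W$ with plays avoiding $B$. Hence a winning strategy for Eve in $\G$ from $v_0$ lifts to one winning the safety condition $B$ from $(v_0,m_0)$, so Eve wins that safety game. By Lemma~\ref{lem:memoryless_internal_safety} she then has a \emph{positional} winning strategy $\sigma^*$ in $\A \times \M$, and by the key observation relating memoryless strategies in $\A \times \M$ to strategies in $\A$ using $\M$ as memory, this produces a winning strategy in $\G$ whose memory is $\M$. Because $\sigma^*$ avoids $B$, the sink $\bot$ is never visited along the plays it generates, so discarding it leaves a memory structure with state set contained in $\Res(W)$; this gives a winning strategy with at most $|\Res(W)|$ memory states. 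The stated consequence then follows: $\memFD(W) \le \mem(W)$ because finite-degree arenas form a subfamily of all arenas, and $\mem(W) \le |\Res(W)|$ since the bound just established holds uniformly over all games and initial vertices.

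I expect the main obstacle to lie not in the construction but in the bookkeeping around the empty quotient: one must argue cleanly that tracking $\bot$ is enough to turn $W$ into a genuine internal safety condition (the biconditional above), and that $\bot$ can be removed afterward without disturbing the winning strategy, so that the final count is exactly $|\Res(W)|$ rather than $|\Res(W)| + 1$.
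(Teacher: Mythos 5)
Your proposal is correct and follows essentially the same route as the paper: use the left quotients as a memory structure, observe that in the expanded arena the objective becomes an internal safety condition (forbidding exactly the moves whose quotient becomes empty), apply the folklore positional determinacy of safety conditions, and transfer the positional strategy back as a strategy with memory $\M$. The only cosmetic difference is that you isolate the empty quotient as an explicit sink $\bot$ and discard it at the end, where the paper instead detects it via a coloring $c'$ into $\set{0,1}$ and notes separately that the empty quotient can be dropped since the game is already lost there.
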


\begin{proof}
We construct a memory structure $\M$, as follows: $\M = (\Res(W),W,\next)$,
where $\next(w^{-1} W,a) = (w^{-1} W) \cdot a$.
At any point in the game, the memory state computed by $\M$ is the current left quotient.
Let $\A$ be an arena. 
We construct the expanded arena $\A \times \M$ equipped with the coloring function $c' : E \times \Res(W) \to \set{0,1}$ defined by:
$$c'(\_,w^{-1} W) = \begin{cases}
0 & \textrm{ if } w^{-1} W = \emptyset \textrm{ (equivalently, } w \in P \textrm{)}\ ,\\
1 & \textrm{ otherwise}\ .
\end{cases}$$

We attach to $\A \times \M$ the safety condition induced by $B = \set{0}$,
giving rise to the game $\G \times \M = (\A \times \M,\Safe(B))$.
First observe that by construction, the plays in $\A \times \M$ are of the form 
$(e_0,c(e_0)^{-1} W) \cdot (e_1, c(e_0 \cdot e_1)^{-1} W) \cdots (e_k,c(e_0 \cdot e_1 \cdots e_k)^{-1} W)$,
so by definition of $c'$ a play is winning is $\G \times \M$ if and only if its projection (on the first component) is winning in $\G$.

It follows that a winning strategy for Eve in $\G$ from $v_0$ induces a winning strategy in $\G \times \M$ from $(v_0,W)$.
Now, thanks to Lemma~\ref{lem:memoryless_internal_safety}, since Eve wins in $\G \times \M$, she has a positional winning strategy.
This induces a winning strategy in $\G$ using $\M$ as memory structure, concluding the proof of Lemma~\ref{lem:upper_bound_trivial}.
\end{proof}

The game $\G \times \M$ defined above will be an important tool in the proofs to follow.
We will also rely on the following remark: assume we want to prove that a strategy $\sigma$ is winning.
Then it is enough to show that for all plays $\pi$ consistent with $\sigma$,
for all $k$, $c(\pi_k)^{-1} W \neq \emptyset$, where $\pi_k$ is the prefix of $\pi$ of length $k$.
This simple observation follows from the definition of safety objectives.

\subsection*{A Tighter Upper Bound}

The memory structure $\M$ is not optimal.
A first remark is that the empty left quotient (which exists if $W \neq A^\omega$) 
can be removed from the memory states as the game is lost.
This is why by ``left quotient'' we mean ``non-empty left quotient of $W$''.

The second remark is the following: let $L_1$ and $L_2$ two left quotients of $W$,
such that $L_1 \subseteq L_2$.
With the same notations as above, consider a vertex $v$ in the arena $\A$.
If Eve wins from $(v,L_1)$ in $\G \times \M$, then she also wins from $(v,L_2)$:
indeed, she can play as she would have played from $(v,L_1)$.
Since this ensures from $v$ that all plays are winning for $L_1$,
then a fortiori they are winning for $L_2$.

This suggests to restrict the memory states only to \textit{minimally winning} left quotients with respect to inclusion.
Two issues arise: 
\begin{itemize}
	\item which left quotients are winning depends on the current vertex,
so the semantics of a memory state can no longer be \textit{one} left quotient, 
but rather a left quotient for each possible vertex,
	\item there may not exist \textit{minimally winning} left quotients.
\end{itemize}

For the sake of presentation,  we first show how to deal with the first issue, assuming the second issue does not appear.
Specifically, in the following lemma, we assume that $\Res(W)$ is finite (\textit{i.e.} $W$ is regular),
implying the existence of minimally winning left quotients.
We will later drop this assumption.

\begin{lem}[Upper bound in the regular case]\label{lem:upper_bound_tight_regular}
Let $W$ a safety objective, we assume that $\Res(W)$ is finite.

For all games $\G = (\A,W)$, if Eve has a winning strategy from an initial vertex $v_0$,
then she has a winning strategy using at most $K$ memory states,
where $K$ is the width of $(\Res(W),\subseteq)$.
\end{lem}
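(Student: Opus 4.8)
The plan is to reuse the game $\G \times \M$ and its positional winning strategy from the proof of Lemma~\ref{lem:upper_bound_trivial}, but to compress the memory $\Res(W)$ down to $K$ states by exploiting the two monotonicity phenomena highlighted above. The first is that being winning is downward closed in the memory component: if $L_1 \subseteq L_2$ and Eve wins from $(v,L_1)$, she wins from $(v,L_2)$. The second is that the quotient operation is monotone, since $L \cdot a = a^{-1}L = \set{\rho \mid a\rho \in L}$, so $L_1 \subseteq L_2$ implies $L_1 \cdot a \subseteq L_2 \cdot a$. I fix a positional winning strategy $\sigma^\star$ for Eve on $\WE(\G \times \M)$, which exists by Lemma~\ref{lem:memoryless_internal_safety} because $\G \times \M$ carries a safety \emph{condition}. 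For a vertex $v$ of $\A$ I write $\mathrm{Win}(v) = \set{L \in \Res(W) \mid (v,L) \in \WE(\G \times \M)}$, which by the first fact is upward closed in $(\Res(W),\subseteq)$.

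Since $\Res(W)$ is finite of width $K$, I would invoke Dilworth's theorem to partition it into $K$ chains $C_1,\dots,C_K$. For each chain $C_i$ and each vertex $v$, the set $\mathrm{Win}(v)\cap C_i$ is an upward closed subset of the finite chain $C_i$, hence has a least element whenever it is non-empty; I denote it $\ell_i(v)$. This realizes the ``one left quotient per vertex'' semantics attached to each memory state. I then define a memory structure $\M' = (\set{1,\dots,K}, m_0, \up')$ where the update reads an edge $e = (v,v')$ and sends state $i$ to the unique index of the chain containing $\ell_i(v) \cdot c(e)$ (defined arbitrarily where $\ell_i(v)$ is undefined), the initial state $m_0$ is the index of the chain containing $W = \varepsilon^{-1}W$, and the next-move function is $\next'(v,i) = \sigma^\star(v,\ell_i(v))$ for $v \in \VE$.

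The correctness rests on an invariant along every play $\pi = e_0 e_1 \cdots$ consistent with the induced strategy $\sigma$ from $v_0$: writing $v_k$ for the current vertex, $L_k = c(\pi_k)^{-1}W$ for the true left quotient, and $i_k$ for the memory state after $k$ steps, I claim that $\ell_{i_k}(v_k)$ is defined, that $(v_k,\ell_{i_k}(v_k))$ is winning in $\G \times \M$, and that $\ell_{i_k}(v_k) \subseteq L_k$. The base case is immediate since $(v_0,W)$ is winning and $\ell_{m_0}(v_0) \subseteq W = L_0$. For the inductive step, whether $v_k$ is Eve's (playing $\sigma^\star(v_k,\ell_{i_k}(v_k))$) or Adam's (playing any edge, all of whose successors stay winning from a winning Adam configuration of a safety game), the play stays in the winning region, so $(v_{k+1}, \ell_{i_k}(v_k)\cdot c(e_k))$ is winning; in particular this quotient is non-empty, its chain index is $i_{k+1}=\up'(i_k,e_k)$, and $\ell_{i_{k+1}}(v_{k+1}) \subseteq \ell_{i_k}(v_k)\cdot c(e_k)$ by minimality in its chain. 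Monotonicity then gives $\ell_{i_k}(v_k)\cdot c(e_k) \subseteq L_k \cdot c(e_k) = L_{k+1}$, closing the induction. As each $\ell_{i_k}(v_k)$ is non-empty and contained in $L_k$, every $L_k$ is non-empty, so by the remark following Lemma~\ref{lem:upper_bound_trivial} the strategy $\sigma$ is winning, and it uses the $K$ states of $\M'$.

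The step I expect to be the crux is making $\up'$ a legitimate function of the memory state and the edge alone: this is exactly what the chain decomposition secures, since each successor quotient $\ell_i(v)\cdot c(e)$ falls into a unique chain and thereby determines the next index deterministically, without any reference to the (a priori unbounded) true left quotient carried along the play. Finiteness of $\Res(W)$ enters in precisely two places — to apply Dilworth and to guarantee that each non-empty $\mathrm{Win}(v)\cap C_i$ has a least element $\ell_i(v)$ — and identifying these two uses is what will point to the well-foundedness hypothesis needed once the regularity assumption is dropped.
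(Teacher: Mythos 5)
Your proof is correct, and the core of it --- the invariant that the tracked left quotient $\ell_{i_k}(v_k)$ is a non-empty, winning under-approximation of the true quotient $c(\pi_k)^{-1}W$, propagated through Eve's moves via the positional strategy on $\G\times\M$ and through Adam's moves via the trap property of the winning region --- is exactly the paper's invariant. Where you diverge is in how the at most $K$ quotients attached to each vertex are chosen. The paper's proof of this lemma takes, at each vertex $v$, the antichain of \emph{minimal} elements of $\mathrm{Win}(v)$ (which has size at most $K$ by definition of the width), numbers them arbitrarily per vertex, and lets the update pick \emph{some} index $j$ with $L_j(v')\subseteq L_i(v)\cdot c(v,v')$. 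You instead fix a global Dilworth decomposition of $\Res(W)$ into $K$ chains and let memory state $i$ denote, at vertex $v$, the least winning element of chain $C_i$; the update then becomes canonical, since the successor quotient's chain determines the next index. This buys you a cleaner, deterministic memory structure and --- as you correctly observe --- isolates the two uses of finiteness (Dilworth, and existence of least elements in chains). It is worth noting that your construction is precisely the one the paper deploys later for the finite-degree lemma (Lemma~\ref{lem:upper_bound_tight_finite_degree}), specialized to the case where each $\mathrm{Win}(v)\cap C_i$ has a least element; so your proof of the regular case doubles as a direct path to the well-founded generalization (Lemma~\ref{lem:upper_bound_tight_well_founded}), whereas the infinite-chain case of Lemma~\ref{lem:upper_bound_tight_finite_degree} still requires the additional ``infinite intersection'' argument that your least-element device does not cover. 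Two cosmetic points: $\next'(v,i)=\sigma^\star(v,\ell_i(v))$ should be read as the projection to $\A$ of the edge chosen by $\sigma^\star$ in $\A\times\M$; and upward-closedness of $\mathrm{Win}(v)\cap C_i$ is not actually needed for the existence of a least element (finiteness of the chain suffices), though it is harmless.
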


The key intuition to have is the following: our goal is to construct a strategy $\sigma$ such that for a vertex $v$, we associate to every memory state $i$ a left quotient $L_i(v)$ ensuring that in a play $\pi$ consistent with $\sigma$, then $L_i(v)$ is an under-approximation of $c(\pi)^{-1} W$, meaning $L_i(v) \subseteq c(\pi)^{-1} W$.

\begin{proof}
We use the same notations as for the proof of Lemma~\ref{lem:upper_bound_trivial},
and construct a smaller memory structure together with a winning strategy using this memory structure.
In this proof, by winning we mean winning in the game $\G \times \M$.

Let $K$ be the cardinal of the maximal antichain of left quotients of $W$.
We construct a memory structure $\M^* = (\set{1,\ldots,K}, 1, \up)$,
and a strategy $\sigma$ induced by the next-move function $\next$.

Let $v$ be a vertex in $\A$.
We consider the set of minimal left quotients $L$ such that $(v,L)$ is winning.
(Here we use the finiteness of $\Res(W)$ to guarantee the existence of such left quotients.)
This is an antichain, so there are at most $K$ of them, we denote them by $L_1(v),\ldots,L_p(v)$, for some $p \le K$.
The \textit{key} property is that for every left quotient $L$ such that $(v,L)$ is winning, there exists $i$
such that $L_i(v) \subseteq L$.
Furthermore, we choose $L_1(v_0)$ such that $L_1(v_0) \subseteq W$.
(Indeed, by assumption $(v_0,W)$ is winning.)

We define the update function: $\up(i,(v,v'))$ is a $j$ such that $L_j(v') \subseteq L_i(v) \cdot c(v,v')$.
Note that in general, such a $j$ may not exist; it does exist if $(v', L_i(v) \cdot c(v,v'))$ is winning,
and we will prove that this will always be the case when playing the strategy $\sigma$.

We define the next-move function $\next$ (inducing $\sigma$). Let $v \in \VE$, and consider $(v,L_i(v))$: 
since Eve wins from there, there exists an edge $(v,v') \in E$ such that $(v', L_i(v) \cdot c(v,v'))$ is winning.
Define $\next(v,i)$ to be this $v'$.

We show that the strategy $\sigma$ is winning.
Consider a play $\pi = (v_0,v_1) \cdot (v_1,v_2) \cdots$ consistent with $\sigma$,
and $i_0 \cdot i_1 \cdots$ the sequence of memory states assumed along this play.
Denote $\pi_k$ the prefix of $\pi$ of length $k$, we prove that for all $k$, $L_{i_k}(v_k) \subseteq c(\pi_k)^{-1} W$.
Note that by definition, $(v_k,L_{i_k}(v_k))$ is winning, so $L_{i_k}(v_k) \neq \emptyset$, implying that $c(\pi_k)^{-1} W \neq \emptyset$.

We proceed by induction.
For $k = 0$, it follows from $L_1(v_0) \subseteq W$.
Let $k > 0$, the induction hypothesis is $L_{i_{k-1}}(v_{k-1}) \subseteq c(\pi_{k-1})^{-1} W$.
We distinguish two cases. 
\begin{itemize}
	\item Either $v_{k-1}$ belongs to Eve, then by construction of $\sigma$ we have that $(v_k,L_{i_{k-1}}(v_{k-1}) \cdot c(v_{k-1},v_k))$ is winning.
It follows that the update function is well defined, and $L_{i_k}(v_k) \subseteq L_{i_{k-1}}(v_{k-1}) \cdot c(v_{k-1},v_k)$,
which together with the induction hypothesis implies $L_{i_k}(v_k) \subseteq c(\pi_k)^{-1} W$.
	\item Or $v_{k-1}$ belongs to Adam. Since Adam cannot escape $\WE(\G \times \M)$,
we have that $(v_k,L_{i_{k-1}}(v_{k-1}) \cdot c(v_{k-1},v_k))$ is winning,
and the same reasoning concludes.
\end{itemize}
It follows that the strategy $\sigma$ is winning, concluding the proof of Lemma~\ref{lem:upper_bound_tight_regular}.
\end{proof}

We now get rid of the regularity assumption.
This means that for a vertex $v$, there may not be a minimal left quotient $L$ such that $(v,L)$ is winning.
We present two ways to get around this difficulty: either an assumption on $W$, or an assumption on the games.

\begin{lem}[Upper bound -- well founded assumption]\label{lem:upper_bound_tight_well_founded}
Let $W$ a safety objective, we assume that $(\Res(W),\subseteq)$ is well-founded.

For all games $\G = (\A,W)$, if Eve has a winning strategy from an initial vertex $v_0$,
then she has a winning strategy using at most $K$ memory states,
where $K$ is the width of $(\Res(W),\subseteq)$.

Consequently, $\mem(W)$ is smaller than or equal to the width of $(\Res(W),\subseteq)$.
\end{lem}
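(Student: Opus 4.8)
The plan is to follow the construction in the proof of Lemma~\ref{lem:upper_bound_tight_regular} verbatim, the only gap being the single place where finiteness of $\Res(W)$ was invoked, namely to guarantee, for each vertex $v$, the existence of \emph{minimal} winning left quotients together with the associated key property. I would replace that one step by an argument deriving the same conclusion from well-foundedness, and then run the rest of the proof unchanged.

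Concretely, fix a vertex $v$ and let $S_v = \set{L \in \Res(W) \mid (v,L) \textrm{ is winning in } \G \times \M}$. The heart of the matter is to show that every $L \in S_v$ contains a $\subseteq$-minimal element of $S_v$. I would work inside the nonempty set $D = \set{L' \in S_v \mid L' \subseteq L}$, ordered by reverse inclusion, so that its $\subseteq$-minimal elements are precisely its maximal ones for this order. Any chain $C$ of $D$ is in particular a chain of $\Res(W)$, so by well-foundedness it contains a $\subseteq$-minimal element $m$; being totally ordered, $C$ then has $m$ as its $\subseteq$-least element, so $m$ is a lower bound of $C$ for $\subseteq$, i.e.\ an upper bound of $C$ for the reversed order, and $m \in D$. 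Zorn's lemma thus yields a maximal element of $D$, that is, a $\subseteq$-minimal element of $S_v$ lying below $L$. Minimality within $D$ transfers to minimality within $S_v$, since any strictly smaller winning quotient would again belong to $D$. This recovers both the existence of the $L_i(v)$ and the key property that every winning quotient at $v$ dominates some $L_i(v)$.

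With this in hand, the remainder goes through identically. The set of $\subseteq$-minimal winning left quotients at $v$ is an antichain of $\Res(W)$, hence has cardinality at most the width $K$, so it can be enumerated $L_1(v),\dots,L_p(v)$ with $p \le K$, giving the memory structure $\M^* = (\set{1,\dots,K},1,\up)$ exactly as before; I would again choose $L_1(v_0) \subseteq W$. The update function $\up(i,(v,v'))$ picks a $j$ with $L_j(v') \subseteq L_i(v) \cdot c(v,v')$, which exists by the key property whenever $(v', L_i(v)\cdot c(v,v'))$ is winning, and the next-move function together with the inductive invariant $L_{i_k}(v_k) \subseteq c(\pi_k)^{-1} W$ are verified word for word, splitting on whether $v_{k-1}$ belongs to Eve or to Adam. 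The final ``consequently'' clause is then immediate from the definition of $\mem(W)$.

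The step I expect to be delicate is the existence argument above: one must be careful that the paper's notion of well-foundedness (every chain contains a minimal element) is precisely what feeds the chain condition of Zorn's lemma, and that a $\subseteq$-minimal element of a \emph{chain} is automatically its least element. This is exactly the point where the regular proof used $|\Res(W)| < \infty$, and getting it right without silently strengthening well-foundedness into ``every subset has a minimal element'' (an unnecessary appeal to dependent choice) is the crux.
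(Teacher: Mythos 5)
Your proposal is correct and follows the paper's proof exactly: the paper itself disposes of this lemma in one sentence, saying the argument of Lemma~\ref{lem:upper_bound_tight_regular} goes through verbatim once one notes that well-foundedness of $(\Res(W),\subseteq)$ guarantees, at each vertex, minimal winning left quotients satisfying the key domination property. Your Zorn's-lemma argument (chains have least elements, hence upper bounds in the reversed order) is a correct and careful filling-in of precisely the detail the paper leaves implicit.
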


Note that if $W$ is regular, then $(\Res(W),\subseteq)$ is finite, hence well-founded.
The proof of Lemma~\ref{lem:upper_bound_tight_well_founded} is the same as for Lemma~\ref{lem:upper_bound_tight_regular}, once we note that the well founded assumption implies that for any vertex $v$, there exists a minimal left quotient $L$ such that $(v,L)$ is winning.


\begin{lem}[Upper bound -- finite degree assumption]\label{lem:upper_bound_tight_finite_degree}
Let $W$ a safety objective. 

For all games $\G = (\A,W)$ with finite degree, if Eve has a winning strategy from an initial vertex $v_0$,
then she has a winning strategy using at most $K$ memory states,
where $K$ is the width of $(\Res(W),\subseteq)$.

Consequently, $\memFD(W)$ is smaller than or equal to the width of $(\Res(W),\subseteq)$.
\end{lem}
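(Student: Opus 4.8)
The plan is to mirror the argument of Lemma~\ref{lem:upper_bound_tight_regular} as closely as possible, and to isolate exactly where the finiteness of $\Res(W)$ was used. Inspecting that proof, the regularity assumption entered in precisely one place: for each vertex $v$ we needed the set of left quotients $L$ with $(v,L)$ winning to admit \emph{minimal} elements, so that the antichain $L_1(v),\ldots,L_p(v)$ of minimally winning quotients is well defined and satisfies the \textbf{key} property that every winning $L$ dominates some $L_i(v)$ from above. Everything downstream of that---the definition of $\up$, the definition of $\next$, and the inductive invariant $L_{i_k}(v_k) \subseteq c(\pi_k)^{-1}W$---never used finiteness again. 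So the task reduces to recovering the existence of minimal winning quotients \emph{using the finite-degree hypothesis instead of well-foundedness}.

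The hard part will be this existence step: under finite degree, $(\Res(W),\subseteq)$ need not be well-founded, so there could be an infinite strictly descending chain $L^{(0)} \supsetneq L^{(1)} \supsetneq \cdots$ of quotients each winning from $v$, with no minimal element. First I would observe that the intersection $L^\infty = \bigcap_n L^{(n)}$ is again a left quotient of $W$: since $W = \Safe(P)$ is a safety objective (a closed set in the Cantor topology), each $w^{-1}W$ is closed, and I would argue that a decreasing intersection of quotients of a safety objective is still a quotient, or at least still a closed set on which the safety-style invariant can be run. The crucial claim is then that $(v,L^\infty)$ is \emph{still winning} in $\G\times\M$. This is where finite degree does the work: I would run a König's-lemma / compactness argument. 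For each $n$, Eve has a winning strategy $\sigma_n$ from $(v,L^{(n)})$; because the arena has finite out-degree (and the memory component of $\G\times\M$ is deterministic), the tree of finite histories consistent with \emph{some} winning strategy is finitely branching, so a limit strategy winning for the intersection objective can be extracted. Concretely, to show $(v,L^\infty)$ is winning it suffices, by the safety remark after Lemma~\ref{lem:upper_bound_trivial}, to exhibit a strategy under which every finite prefix stays in a non-empty quotient; finite degree lets me diagonalise the $\sigma_n$ into one strategy that is safe for every $L^{(n)}$ simultaneously, hence safe for $L^\infty$.

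Having established that arbitrary decreasing chains of winning quotients have a winning lower bound, a Zorn-type argument gives, for each $v$, that the poset of winning quotients at $v$ has minimal elements, and moreover every winning quotient lies above one of them. This restores the \textbf{key} property verbatim. The minimal winning quotients at a fixed $v$ form an antichain in $(\Res(W),\subseteq)$, so there are at most $K$ of them, and I take $\M^* = (\{1,\ldots,K\},1,\up)$ exactly as before. I would then \emph{reuse without change} the definitions of $\up(i,(v,v'))$ as some $j$ with $L_j(v') \subseteq L_i(v)\cdot c(v,v')$, of $\next$ on Eve's vertices, and the choice $L_1(v_0)\subseteq W$.

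Finally, the verification that $\sigma$ is winning is identical to the regular case and I would not rewrite it: by induction on $k$ one shows $L_{i_k}(v_k)\subseteq c(\pi_k)^{-1}W$, using in the Eve case that $\next$ was chosen so that $(v_k, L_{i_{k-1}}(v_{k-1})\cdot c(v_{k-1},v_k))$ is winning (guaranteeing $\up$ is defined and a suitable $j$ exists), and in the Adam case that Adam cannot escape $\WE(\G\times\M)$. Since each $(v_k,L_{i_k}(v_k))$ is winning we get $L_{i_k}(v_k)\neq\emptyset$, whence $c(\pi_k)^{-1}W\neq\emptyset$ for all $k$, which by the safety remark means $\pi$ is winning. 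This yields a winning strategy using at most $K$ memory states, and hence $\memFD(W)\le K$. The entire novelty over Lemma~\ref{lem:upper_bound_tight_regular} is thus concentrated in the compactness/limit argument producing minimal winning quotients under finite degree, which I expect to be the main obstacle.
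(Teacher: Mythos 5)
Your diagnosis of where finiteness enters the proof of Lemma~\ref{lem:upper_bound_tight_regular} is exactly right, and your compactness observation is essentially correct: in a finite-degree arena the reachable part is countable, the space of Eve's strategies from $v$ is a product of finite nonempty sets, the set of strategies ensuring a closed set $L$ is closed in this product, and a decreasing sequence of nonempty closed sets of strategies has nonempty intersection; so if $(v,L^{(n)})$ is winning for every $n$, one strategy ensures the play lies in $L^\infty=\bigcap_n L^{(n)}$. The genuine gap is the next step. You conclude that ``the poset of winning quotients at $v$ has minimal elements'', but $L^\infty$ is in general \emph{not} a left quotient, so Zorn's lemma does not apply to $(\W(v),\subseteq)$ and minimal winning \emph{quotients} still need not exist under finite degree. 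This is not a corner case: for the outbidding objective of Section~\ref{sec:applications}, $\bigcap_n (a^n)^{-1}W=\set{a^\omega}\cup a^*b^\omega$, which is not of the form $w^{-1}W$ for any finite word $w$; at an Eve vertex carrying a single $b$-labelled self-loop, the winning quotients are exactly the $(a^nb^p)^{-1}W$, and none of them is minimal. Consequently the claimed \emph{verbatim} reuse breaks down: your objects $L_i(v)$ must live in a completion of $\Res(W)$ by closed sets, the game $\G\times\M$ and the operation $L\cdot a$ must be re-defined on these limit objects, and --- most importantly --- you still owe a proof that an antichain of such limit objects has cardinality at most $K$, the width of $(\Res(W),\subseteq)$. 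That bound is no longer immediate, since the antichain no longer sits inside $\Res(W)$; it can be recovered for finite antichains by extracting pairwise incomparable quotients far enough down the defining chains, but this argument is absent from your proposal and is precisely what ties the construction back to the quantity $K$ in the statement.

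For comparison, the paper avoids limits altogether via Dilworth's theorem: $K$ is also the least number of maximal chains needed to cover $\Res(W)$, so $\W(v)$ is covered by at most $K$ descending chains $\ell_1(v),\dots,\ell_p(v)$; a memory state now denotes a \emph{chain} of quotients rather than a single one, and the invariant becomes ``some element of $\ell_{i_k}(v_k)$ under-approximates $c(\pi_k)^{-1}W$''. Finite degree is then used exactly once and elementarily: at an Eve vertex whose chain is infinite, by pigeonhole over the finitely many outgoing edges some edge is good for infinitely many elements of the chain, hence (goodness being upward closed along a descending chain) for all of them. Your limit argument is, in effect, a topological rephrasing of this pigeonhole step; to turn your proposal into a complete proof you must either carry out the completion construction in full, including the antichain bound, or switch to the chain-decomposition bookkeeping.
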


The proof of Lemma~\ref{lem:upper_bound_tight_finite_degree} extends the proof of Lemma~\ref{lem:upper_bound_tight_regular}, getting rid of the well founded assumption. To this end, we need an equivalent formulation of the width of an ordered set $(X,\le)$: recall that it is defined as the cardinal of a maximal antichain of $X$. Thanks to Dilworth's theorem, the width is also the smallest number of maximal chains required to cover all elements.
This point of view will be useful in proving our results, and as we will discuss in the related works section, in further extending them.

Before going into the technical details, let us highlight the main difficulty of the proof.
Under the previous assumptions, for a vertex $v$, we could associate to every memory state $i$ a left quotient $L_i(v)$
and ensure that in a play $\pi$ consistent with the strategy we construct, $L_i(v)$ is an under-approximation of $c(\pi)^{-1} W$,
meaning $L_i(v) \subseteq c(\pi)^{-1} W$.
In this proof, we will not be able to associate to every memory state a single left quotient, but rather a chain of left quotients, and ensure that some left quotient in this chain is an under-approximation.

\begin{proof}
We use the same notations as for the proof of Lemma~\ref{lem:upper_bound_tight_regular},
and construct a memory structure together with a winning strategy using this memory structure.
Let $K$ be the width of $(\Res(W),\subseteq)$.
We construct a memory structure $\M^* = (\set{1,\ldots,K}, 1, \up)$, and a strategy $\sigma$ induced by the next-move function $\next$.

Let $v$ be a vertex in $\A$.
We consider the set $\W(v)$ of left quotients $L$ such that $(v,L)$ is winning.
Then $(\W(v),\subseteq)$ is an ordered set of width at most $K$: thanks to the characterisation above of the width, we can split $\W(v)$ into maximal decreasing (finite or infinite) chains of left quotients, denoted $\ell_1(v),\ldots,\ell_p(v)$, for some $p \le K$.
Note that this decomposition is not unique.
Up to renumbering, let us assume that $W \in \ell_1(v_0)$: indeed, by assumption $(v_0,W)$ is winning.

We say that $(v,\ell)$ is winning if for all $L \in \ell$, we have that $(v,L)$ is winning.
For $\ell$ a chain of left quotients and $a \in A$, we define $\ell \cdot a$ component-wise.
Note that even if $\ell$ is infinite, it may be that $\ell \cdot a$ is finite.

We define the update function: $\up(i,(v,v'))$ is a $j$ as follows.
\begin{itemize}
	\item If $\ell_i(v) \cdot c(v,v')$ is finite, denote $L \cdot c(v,v')$ its last element.
Choose $j$ such that $L \cdot c(v,v') \in \ell_j(v')$.
Note that in general, such a $j$ may not exist; it does exist if $(v', \ell_i(v) \cdot c(v,v'))$ is winning,
and we will prove that this will always be the case when playing the strategy $\sigma$.
	\item If $\ell_i(v) \cdot c(v,v')$ is infinite, then choose $j$ such that $\ell_j(v')$ 
has an infinite intersection with $\ell_i(v) \cdot c(v,v')$.
Such a $j$ exists without any assumption.
\end{itemize}

We define the next-move function $\next$ (inducing $\sigma$). Let $v \in \VE$, and consider $(v,\ell_i(v))$.
Let $L \in \ell_i(v)$, Eve wins from $(v,L)$, so there exists an edge $(v,v') \in E$ such that $(v', L \cdot c(v,v'))$ is winning,
we say that $(v,v') \in E$ is good for $L$.
Since $\W(v')$ is upward closed, if $(v,v') \in E$ is good for $L$, then it is good for every $L'$ such that $L \subseteq L'$.
We argue that there exists an edge $(v,v') \in E$ that is good for all $L \in \ell_i(v)$, 
\textit{i.e.} such that $(v',\ell_i(v) \cdot c(v,v'))$ is winning;
define $\next(v,i)$ to be this $v'$.
There are two cases:
\begin{itemize}
	\item Either $\ell_i(v)$ is finite, denote $L$ its last element.
Since $\ell_i(v)$ is decreasing, an edge good for $L$ is good for all $L' \in \ell_i(v)$.
	\item Or $\ell_i(v)$ is infinite.
The vertex $v$ has finite degree, so there exists an edge which is good for infinitely many $L \in \ell_i(v)$.
Since $\ell_i(v)$ is decreasing, it is good for all $L' \in \ell_i(v)$.
\end{itemize} 

We show that the strategy $\sigma$ is winning. 
Consider a play $\pi = (v_0,v_1) \cdot (v_1,v_2) \cdots$ consistent with $\sigma$,
and $i_0 \cdot i_1 \cdots$ the sequence of memory states assumed along this play.
Denote $\pi_k$ the prefix of $\pi$ of length $k$, we prove that for all $k$, 
there exists $L \in \ell_{i_k}(v_k)$ such that $L \subseteq c(\pi_k)^{-1} W$.
Note that by definition, $(v_k,\ell_{i_k}(v_k))$ is winning, so $c(\pi_k)^{-1} W \neq \emptyset$.

We proceed by induction.
For $k = 0$, it follows from $W \in \ell_1(v_0)$.
Let $k > 0$, 
the induction hypothesis implies the existence of $L \in \ell_{i_{k-1}}(v_{k-1})$ such that $L \subseteq c(\pi_{k-1})^{-1} W$.
We distinguish two cases, and denote $c_k = c(v_{k-1},v_k)$.
\begin{itemize}
	\item Either $v_{k-1}$ belongs to Eve, then by construction of $\sigma$ we have that $(v_k,\ell_{i_{k-1}}(v_{k-1}) \cdot c_k)$ is winning.
It follows that the update function is well defined, and:
	\begin{enumerate}
		\item If $\ell_{i_{k-1}}(v_{k-1}) \cdot c_k$ is finite,
		denote $L' \cdot c_k$ its last element, we have $L' \cdot c_k \in \ell_{i_k}(v_k)$.
		Since $L' \cdot c_k$ is the last element of $\ell_{i_{k-1}}(v_{k-1}) \cdot c_k$, it follows that $L' \subseteq L$.
		We have thus $L' \cdot c_k \subseteq L \cdot c_k$, so $L' \subseteq c(\pi_k)^{-1} W$,
		and $L' \cdot c_k \in \ell_{i_k}(v_k)$.
		\item If $\ell_{i_{k-1}}(v_{k-1}) \cdot c_k$ is infinite, $\ell_{i_k}(v_k)$ has an infinite intersection
		with $\ell_{i_{k-1}}(v_{k-1}) \cdot c_k$.
		So there exists $L' \subseteq L$ with $L' \in \ell_{i_{k-1}}(v_{k-1})$ such that $L' \cdot c_k$ is in this intersection.
		We have $L' \cdot c_k \in \ell_{i_k}(v_k)$ and $L' \subseteq c(\pi_k)^{-1} W$.
	\end{enumerate}
	\item Or $v_{k-1}$ belongs to Adam. Since Adam cannot escape $\WE(\G \times \M)$,
we have that $(v_k,\ell_{i_{k-1}}(v_{k-1}) \cdot c_k)$ is winning,
and the same reasoning concludes.
\end{itemize}
It follows that the strategy $\sigma$ is winning, concluding the proof of Lemma~\ref{lem:upper_bound_tight_finite_degree}.
\end{proof}

\subsection*{A Matching Lower Bound}

\begin{lem}[Lower bound]
For all safety objectives $W$, there exists a game $\G = (\A,W)$ with finite degree and an initial vertex $v_0$ where Eve has a winning strategy,
but no winning strategy using less than $K$ memory states, where $K$ is the width of $(\Res(W),\subseteq)$.

Consequently, $\memFD(W)$ and $\mem(W)$ are both greater than or equal to the width of $(\Res(W),\subseteq)$.
\end{lem}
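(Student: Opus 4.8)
The plan is to build a single game whose structure forces Eve to distinguish, throughout the play, between the $K$ elements of a maximal antichain of left quotients. Fix a maximal antichain $\{L_1, \ldots, L_K\}$ in $(\Res(W), \subseteq)$, where each $L_j = w_j^{-1} W$ for some witness word $w_j \in A^*$. Because these left quotients are pairwise incomparable, for every pair $j \neq k$ there is a word $u_{j,k} \in A^\omega$ (or a finite prefix witnessing it) that lies in $L_j$ but not in $L_k$; concretely, $w_j \cdot u_{j,k} \in W$ while $w_k \cdot u_{j,k} \notin W$. These separating words are the combinatorial core of the construction: they are what let Adam punish Eve for confusing two antichain elements.

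First I would design the arena in two phases. In an initial \emph{selection phase}, controlled by Adam, Adam picks an index $j \in \{1, \ldots, K\}$ by routing the pebble along edges whose colours spell out the witness word $w_j$; this drives the ``true'' left quotient of the history to exactly $L_j$. The pebble then arrives at a distinguished vertex $v_0$ owned by Eve, and crucially the arena is built so that Eve cannot observe which branch Adam took except through her memory — all selection branches funnel into the same vertex $v_0$. In the \emph{response phase} from $v_0$, Eve must commit to a continuation; the point is that a continuation keeping the play in $L_j$ need not keep it in $L_k$, by the separation above. Since Eve wins from $v_0$ under any $L_j$ (each $L_j$ is non-empty, so some safe continuation exists), she does have a winning strategy overall; but to win she must play a \emph{different} response for each $j$, and hence must remember which $j$ Adam selected.

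The heart of the argument is the counting step. Suppose Eve wins with a memory structure $\M$ of size strictly less than $K$. Then by pigeonhole two distinct selection branches $j \neq k$ lead to the same memory state $m$ at $v_0$: that is, $\up^*$ maps the two histories ending the selection phase to the same $m$. From that memory state Eve's next-move function $\next(v_0, m)$ prescribes one fixed edge, and thereafter the entire play is determined identically in both branches (one should route the response phase through Eve-controlled or suitably deterministic vertices so that equal memory forces equal future moves). Consequently the suffix of colours produced after $v_0$ is the same word $\rho$ in both branches. But then $w_j \cdot \rho$ and $w_k \cdot \rho$ are either both in $W$ or both outside $W$, contradicting that $L_j \neq L_k$ are incomparable: I would choose the response-phase gadget precisely so that Adam can force the suffix to realize a separating word between $L_j$ and $L_k$, making $\rho$ land in $L_j \setminus L_k$ and so losing in the branch where the true quotient is $L_k$. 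This contradiction shows $|\M| \geq K$, giving the lower bound for both $\memFD(W)$ and $\mem(W)$ since the constructed arena has finite degree.

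The main obstacle I expect is \textbf{realizing the separation inside a finite-degree arena}: the separating word between $L_j$ and $L_k$ may be an arbitrary infinite sequence of colours, and for the pigeonhole contradiction to bite, Adam must be able to \emph{force} Eve's fixed response to diverge between the two branches using only finitely many outgoing edges at each vertex. The clean way to handle this is to let Adam, rather than Eve, control the response phase and choose the separating continuation after seeing that Eve's memory has collapsed two branches — but this requires carefully arranging ownership so that Eve's inability to distinguish $j$ from $k$ is genuine while Adam retains enough power to exploit the separation. Getting the interleaving of selection (Adam), Eve's single committed response, and Adam's punishment right — while keeping the quotient of the history exactly on an antichain element during selection and keeping degrees finite — is the delicate part; once the gadget is set up correctly, the pigeonhole-plus-separation contradiction is routine.
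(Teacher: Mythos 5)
Your proposal is correct and follows essentially the same route as the paper: fix a maximal antichain $\{w_1^{-1}W,\ldots,w_K^{-1}W\}$, pick separating words $u_{i,j} \in w_i^{-1}W \setminus w_j^{-1}W$, let Adam spell out some $w_i$, let Eve commit to one of $K$ options, and let Adam then punish any confusion of branches $i \neq j$ by playing $u_{j,i}$, with the pigeonhole argument showing fewer than $K$ memory states must confuse two branches. The ``delicate part'' you flag is resolved in the paper exactly as you suggest: Eve's response is a single choice among $K$ successor vertices $v_1,\ldots,v_K$, after which Adam owns $v_i$ and selects among the words $u_{i,j}$, with the long words realized as finitely branching trees of singly-coloured edges to keep the degree finite.
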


\begin{proof}
Let $W$ a safety objective, and $K$ the width of $(\Res(W),\subseteq)$.
Let us consider $\set{w_1^{-1} W, \ldots, w_K^{-1} W}$ an antichain of left quotients of $W$.
Note that since there are countably many finite words, $K$ is either finite or countable infinite, but not larger.
For $i \neq j$, there exists $u_{i,j} \in A^\omega$ such that $u_{i,j} \in w_i^{-1} W$ and $u_{i,j} \notin w_j^{-1} W$.

We describe the game, illustrated in Figure~\ref{fig:lower_bound}. A play consists in three steps:
\begin{enumerate}
	\item From $v_0$ to $v'_0$: Adam chooses a word in $\set{w_1,\ldots,w_K}$;
	\item From $v'_0$, Eve chooses between $K$ options $v_1,\dots,v_K$;
	\item From $v_i$, Adam chooses a word in $\set{u_{i,j} : j \neq i}$.
\end{enumerate}
More formally: the game between $v_0$ and $v_0'$ is a finitely branching and well founded tree whose branches are labeled with the words $w_1, \dots, w_K$, and the leaves are identified with $v_0'$.
Both for the first and third step labeling an edge by a word is only for graphical representation: formally, we use a sequence of edges labeled by the corresponding sequence of colours.

We first show that Eve has a winning strategy from $v_0$, using $K$ memory states.
It consists in choosing the $i$\textsuperscript{th} option whenever Adam chooses the word $w_i$: 
whatever Adam chooses at the third step, $w_i \cdot u_{i,j} \in W$.

We now show that there exists no winning strategy using less than $K$ memory states.
Indeed, such a strategy will not comply with the above strategy and for some $i \neq j$, 
choose the $j$\textsuperscript{th} option if Adam chooses $w_i$.
Then Adam wins by playing $u_{j,i}$, since $w_i \cdot u_{j,i} \notin W$.
\end{proof}

\begin{figure}
\begin{center}
\begin{tikzpicture}[scale = 1]
\begin{scope}
\node[Adam]  at (0,2) (v0)	{$v_0$};
\node	    at (2,2) ()	{\begin{LARGE}$\vdots$\end{LARGE}};
\node[Eve] at (4,2)	 (v'0)	{$v'_0$};

\node[Adam]  at (6,4) (v1)	{$v_1$};
\node	    at (7,4) ()	{\begin{LARGE}$\cdots$\end{LARGE}};

\node[Adam]  at (6,2) (vi)	{$v_i$};

\node[Adam]  at (6,0) (vK)	{$v_K$};
\node	    at (7,0) ()	{\begin{LARGE}$\cdots$\end{LARGE}};

\node	    at (8,2) ()	{\begin{LARGE}$\vdots$\end{LARGE}};
\node       at (10,3.5) (vf1)	{};
\node       at (10,2.5) (vf2)	{};
\node       at (10,1.5) (vf3)	{};
\node       at (10,.5) (vf4)	{};
\end{scope}

\path[thick]
(v0) edge[bend left = 50] node[above] {$w_1$} (v'0)
(v0) edge[bend left = 25] node[above] {$w_2$} (v'0)
(v0) edge[bend right = 25] node[below] {$w_{K-1}$} (v'0)
(v0) edge[bend right = 50] node[below] {$w_K$} (v'0) 

(v'0) edge (v1)
(v'0) edge (vi)
(v'0) edge (vK)

(vi) edge[bend left = 20] node[above] {$u_{i,1}$} (vf1)
(vi) edge[bend left = 10] node[above] {$u_{i,2}$} (vf2)
(vi) edge[bend right = 10] node[below] {$u_{i,K-1}$} (vf3)
(vi) edge[bend right = 20] node[below] {$u_{i,K}$} (vf4) 
;
\end{tikzpicture}
\end{center}
\caption{The lower bound.}
\label{fig:lower_bound}
\end{figure}
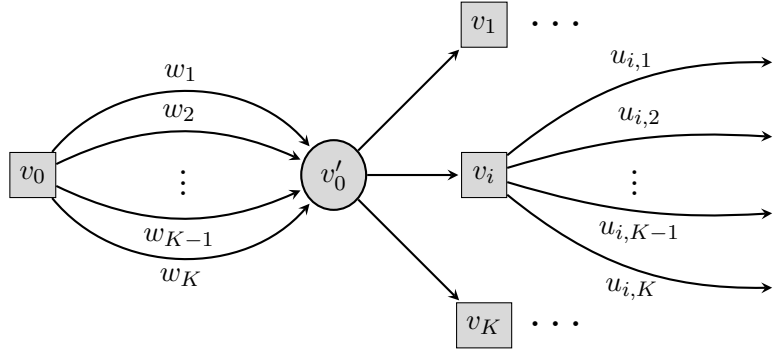

Theorem~\ref{thm:main} easily follows from combining the upper bound and lower bound lemmas.

\section{Examples and Applications}
\label{sec:applications}
In this section, we instantiate Theorem~\ref{thm:main} on different examples.
We chose four examples:
\begin{itemize}
	\item The outbidding objective shows the difference between graphs with finite degree and graphs with infinite degree;
	in particular, it gives a counter example to Lemma~\ref{lem:upper_bound_tight_finite_degree} when dropping the finite degree assumption,
	\item The energy objective is a non-regular half-positional safety objective,
	\item The generalized safety objective is a regular safety objective for which the partially ordered set of left quotients
	has a nice well-known combinatorial structure,
	\item The boundedness objective is a central piece in the theory of regular cost functions.
\end{itemize}

When representing the partial order $(\Res(W),\subseteq)$ for a given $W$,
we use the following convention: a black edge from $L$ to $L'$ means that $L \subseteq L'$,
and a dotted edge labeled $a$ from $L$ to $L'$ means that $L' = L \cdot a$,
so the dotted structure is the minimal (although possibly infinite) deterministic automaton recognizing $W$.

\subsection*{Outbidding Games}

Let $A = \set{a,b,c}$ and $W = \set{a^n \cdot b^p \cdot c^\omega \mid n \le p} \cup \set{a^\omega} \cup a^* \cdot b^\omega$.
It is a non-regular safety objective, called the outbidding objective.
The figure~\ref{fig:outbidding} represents the partial order $(\Res(W),\subseteq)$.
Its width is three: there are two incomparable infinite increasing sequences of left quotients, 
$((a^n)^{-1} W)_{n \in \N}$ and $((b \cdot a^n)^{-1} W)_{n \in \N}$,
and $c^{-1} W$.

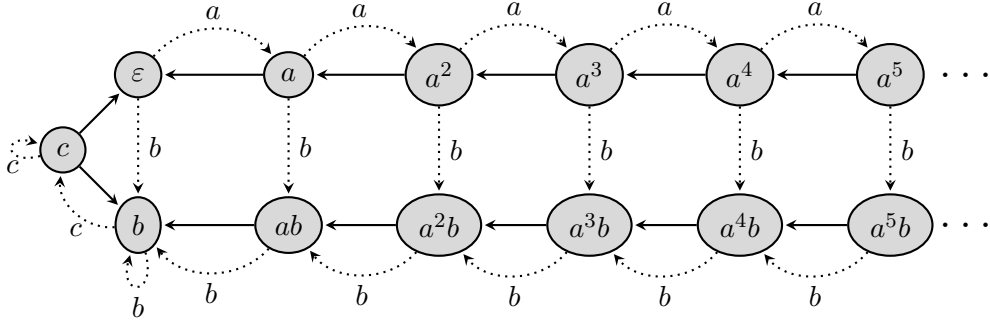
\begin{figure}[!ht]
\begin{center}
\begin{tikzpicture}[scale = 1]
\begin{scope}
\node[Eve] at (1,2)	 (eps)	{$\varepsilon$};
\node[Eve] at (3,2)	 (a)	{$a$};
\node[Eve] at (5,2)	 (aa)	{$a^2$};
\node[Eve] at (7,2)	 (aaa)	{$a^3$};
\node[Eve] at (9,2)	 (aaaa)	{$a^4$};
\node[Eve] at (11,2)	 (aaaaa)	{$a^5$};
\node      at (12,2)	 (aaaaaa)	{\begin{LARGE}$\ldots$\end{LARGE}};

\node[Eve] at (1,0)	 (b)	{$b$};
\node[Eve] at (3,0)	 (ab)	{$ab$};
\node[Eve] at (5,0)	 (aab)	{$a^2b$};
\node[Eve] at (7,0)	 (aaab)	{$a^3b$};
\node[Eve] at (9,0)	 (aaaab)	{$a^4b$};
\node[Eve] at (11,0)	 (aaaaab)	{$a^5b$};
\node      at (12,0)	 (aaaaaab)	{\begin{LARGE}$\ldots$\end{LARGE}};

\node[Eve] at (0,1)	 (c) 	{$c$};
\end{scope}

\path[thick]
(a) edge (eps)
(aa) edge (a)
(aaa) edge (aa)
(aaaa) edge (aaa)
(aaaaa) edge (aaaa)

(ab) edge (b)
(aab) edge (ab)
(aaab) edge (aab)
(aaaab) edge (aaab)
(aaaaab) edge (aaaab)

(c) edge (b)
(c) edge (eps)

(eps) edge[bend left = 50,dotted] node[above] {$a$} (a)
(a) edge[bend left = 50,dotted] node[above] {$a$} (aa)
(aa) edge[bend left = 50,dotted] node[above] {$a$} (aaa)
(aaa) edge[bend left = 50,dotted] node[above] {$a$} (aaaa)
(aaaa) edge[bend left = 50,dotted] node[above] {$a$} (aaaaa)

(eps) edge[dotted] node[right] {$b$} (b)
(a) edge[dotted] node[right] {$b$} (ab)
(aa) edge[dotted] node[right] {$b$} (aab)
(aaa) edge[dotted] node[right] {$b$} (aaab)
(aaaa) edge[dotted] node[right] {$b$} (aaaab)
(aaaaa) edge[dotted] node[right] {$b$} (aaaaab)

(ab) edge[bend left = 50,dotted] node[below] {$b$} (b)
(aab) edge[bend left = 50,dotted] node[below] {$b$} (ab)
(aaab) edge[bend left = 50,dotted] node[below] {$b$} (aab)
(aaaab) edge[bend left = 50,dotted] node[below] {$b$} (aaab)
(aaaaab) edge[bend left = 50,dotted] node[below] {$b$} (aaaab)

(b) edge[bend left = 50,dotted] node[below] {$c$} (c)
(b) edge[loop below, dotted] node[below] {$b$} () 
(c) edge[loop left, dotted] node[below] {$c$} () 
;
\end{tikzpicture}
\end{center}
\caption{The outbidding objective: more $b$'s than $a$'s.}
\label{fig:outbidding}
\end{figure}

Hence thanks to Theorem~\ref{thm:main}, $\memFD(W) = 3$.
However, there exists an outbidding game where Eve wins but needs infinite memory.
This does not contradict Theorem~\ref{thm:main}, as this game, represented in Figure~\ref{fig:infinite_memory}, has a vertex of infinite degree.
It goes as follows: first Adam picks a number $n$, and then Eve takes over: she has to pick a number $p$, higher than or equal to $n$. 
A finite memory strategy can only choose from finitely many options, hence cannot win against all strategies of Adam.

\begin{figure}
\begin{center}
\begin{tikzpicture}[scale = 1]
\begin{scope}
\node[Adam] at (0,2)	 (v0)	{$v_0$};
\node[Eve] at (2,2)	 (v'0)	{$v'_0$};
\node[Eve] at (5,4)	 (v1)	{$v_1$};
\node[Eve] at (5,3)	 (v2)	{$v_2$};
\node[Eve] at (5,2)	 (v3)	{$v_3$};
\node[Eve] at (5,.5)	 (vn)	{$v_n$};
\node      at (3.4,1.8)	 	{\begin{LARGE}$\vdots$\end{LARGE}};
\node      at (3.4,.6)	 	{\begin{LARGE}$\vdots$\end{LARGE}};
\end{scope}

\path[thick]
(v0) edge[loop above] node[above] {$a$} ()

(v0) edge[] node[above] {$a$} (v'0)

(v'0) edge[bend left = 25] node[above] {$b$} (v1)
(v'0) edge[bend left = 15] node[above] {$b^2$} (v2)
(v'0) edge[bend left = 5] node[above] {$b^3$} (v3)
(v'0) edge node[below] {$b^n$} (vn)

(v1) edge[loop right] node[right] {$c$} () 
(v2) edge[loop right] node[right] {$c$} () 
(v3) edge[loop right] node[right] {$c$} () 
(vn) edge[loop right] node[right] {$c$} () 
;
\end{tikzpicture}
\end{center}
\caption{An outbidding game with infinite degree where Eve needs infinite memory to win.}
\label{fig:infinite_memory}
\end{figure}
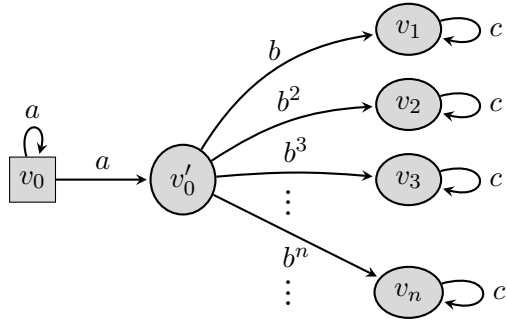

\subsection*{Energy Games}

The setup for the energy objective is the following: assume we are monitoring a resource.
We denote by $A$ the set of actions on this resource, which is any monotonic function $f : \N \to \N$, as for instance:
\begin{itemize}
	\item consuming one unit of the resource,
	\item reloading by one unit,
	\item emptying the resource,
	\item consuming half of the current energy level.
\end{itemize}
Define the energy objective by 
$$W = \set{w = w_0 w_1 \cdots \mid \textrm{ the energy level remains always non-negative}}\ .$$
It is a non-regular safety objective. 
Energy games and several variants have been extensively studied~\cite{BouyerFahrenbergLarsenMarkeySrba08,ChatterjeeDoyen12,ChakrabartiAlfaroHenzingerStoelinga03}.
The Figure~\ref{fig:energy} represents the partial order $(\Res(W),\subseteq)$,
with only two actions: $a$ reloads by one unit, and $b$ consumes one unit.
In general, if the actions are monotonic, then the left quotients are totally ordered by inclusion, so thanks to Theorem~\ref{thm:main} we have $\memFD(W) = 1$.

\begin{figure}[!ht]
\begin{center}
\begin{tikzpicture}[scale = 1]
\begin{scope}
\node[Eve] at (1,2)	 (eps)	{$\varepsilon$};
\node[Eve] at (3,2)	 (a)	{$a$};
\node[Eve] at (5,2)	 (aa)	{$a^2$};
\node[Eve] at (7,2)	 (aaa)	{$a^3$};
\node[Eve] at (9,2)	 (aaaa)	{$a^4$};
\node[Eve] at (11,2)	 (aaaaa)	{$a^5$};
\node      at (12,2)	 (aaaaaa)	{\begin{LARGE}$\ldots$\end{LARGE}};
\end{scope}

\path[thick]
(eps) edge (a)
(a) edge (aa)
(aa) edge (aaa)
(aaa) edge (aaaa)
(aaaa) edge (aaaaa)

(eps) edge[bend left = 50,dotted] node[above] {$a$} (a)
(a) edge[bend left = 50,dotted] node[above] {$a$} (aa)
(aa) edge[bend left = 50,dotted] node[above] {$a$} (aaa)
(aaa) edge[bend left = 50,dotted] node[above] {$a$} (aaaa)
(aaaa) edge[bend left = 50,dotted] node[above] {$a$} (aaaaa)

(a) edge[bend left = 50,dotted] node[below] {$b$} (eps)
(aa) edge[bend left = 50,dotted] node[below] {$b$} (a)
(aaa) edge[bend left = 50,dotted] node[below] {$b$} (aa)
(aaaa) edge[bend left = 50,dotted] node[below] {$b$} (aaa)
(aaaaa) edge[bend left = 50,dotted] node[below] {$b$} (aaaa)
;
\end{tikzpicture}
\end{center}
\caption{The energy objective: always more $a$'s than $b$'s.}
\label{fig:energy}
\end{figure}
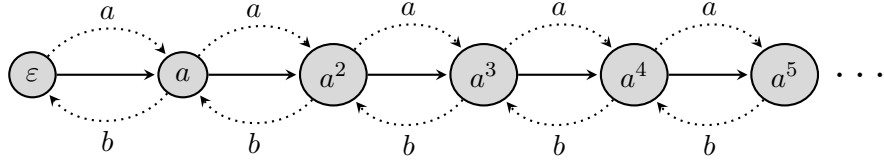

\begin{corC}[\cite{BouyerFahrenbergLarsenMarkeySrba08}]
The energy games are half-positional.
\end{corC}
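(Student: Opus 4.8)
The plan is to derive the corollary directly from the Corollary following Theorem~\ref{thm:main}, which guarantees that a safety objective $W$ is half-positional over all arenas as soon as $(\Res(W),\subseteq)$ is both linearly ordered and well-founded. The discussion preceding the statement already supplies the first ingredient: since the actions are monotonic, the left quotients of the energy objective $W$ are totally ordered by inclusion, and this yields $\memFD(W) = 1$. To upgrade this to half-positionality over \emph{all} arenas, i.e. $\mem(W) = 1$, the only remaining task is to establish that $(\Res(W),\subseteq)$ is well-founded; the energy games are then half-positional by the well-founded branch of that corollary. Thus the genuine additional content beyond the displayed computation of $\memFD(W)$ is exactly well-foundedness.

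The key observation I would use is that, after reading a finite word $w$ of actions from the fixed initial energy level, the only information relevant to the continuation is the current energy level $m_w \in \N$, because each action is a function of the current level. Consequently the left quotient $w^{-1} W$ depends on $w$ only through $m_w$: it is precisely the set of infinite action sequences that keep the level non-negative forever when started from level $m_w$. This parametrizes $\Res(W)$ by a subset of $\N$, and reduces comparisons of quotients to comparisons of levels.

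I would then connect the inclusion order on quotients to the natural order on levels via monotonicity. Given two words $w, w'$ with $m_w \le m_{w'}$ and any $\rho \in w^{-1} W$, every prefix of $\rho$ applied from the larger level $m_{w'}$ produces a level pointwise at least the one obtained from $m_w$, since a composition of monotonic functions is monotonic; hence it stays non-negative, so $\rho \in (w')^{-1} W$. This gives $w^{-1} W \subseteq (w')^{-1} W$, which simultaneously re-proves that the quotients form a chain and shows that a strict inclusion $L' \subsetneq L$ forces a strict decrease of the associated levels. Therefore an infinite strictly decreasing chain of quotients would produce an infinite strictly decreasing sequence in $(\N,\le)$, which is impossible; hence $(\Res(W),\subseteq)$ is well-founded. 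Combining the linear order with this well-foundedness in the Corollary to Theorem~\ref{thm:main} yields $\mem(W) = 1$, i.e. the energy games are half-positional.

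The monotonicity bookkeeping is routine; the only point requiring a little care, and the main (mild) obstacle, is justifying that the current energy level is a complete summary of the relevant history, so that the quotients really are indexed by $\N$ and their inclusion order is governed entirely by the order on levels. Once this reduction is in place, both the total order and the well-foundedness are immediate, and the corollary becomes a direct instantiation of the general principle.
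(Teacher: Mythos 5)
Your proposal is correct and follows the same route as the paper: monotonicity of the actions makes $(\Res(W),\subseteq)$ a chain, and Theorem~\ref{thm:main} then yields the memory bound. Where you go beyond the paper's own text is the well-foundedness check. The paper only derives $\memFD(W)=1$ from the linear order, yet states the corollary as ``half-positional'', which by the paper's definition means $\mem(W)=1$ over \emph{all} arenas; your observation that the non-empty left quotients are indexed by energy levels in $\N$, that inclusion of quotients is governed by the order on levels (so a strict inclusion forces a strict drop in level), and hence that an infinite strictly decreasing chain of quotients would produce an infinite strictly decreasing sequence of naturals, is exactly the missing hypothesis needed to invoke the well-founded branch of Theorem~\ref{thm:main} and its corollary. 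The one point you rightly flag as needing care --- that a non-empty quotient $w^{-1}W$ depends on $w$ only through the level reached after $w$ --- does hold, since each action is a function of the current level only. So your write-up is, if anything, more complete than the paper's one-line derivation; it buys the conclusion over arenas of arbitrary degree rather than only finite degree.
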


\subsection*{Generalized Safety Games}

This example originates from the study of generalized reachability games~\cite{FijalkowHorn10,FijalkowHorn13}.
A generalized reachability objective is a (finite) conjunction of reachability objectives.
Here we take the opponent's vantage point: a generalized safety objective is a (finite) disjunction of (internal) safety objectives.
Specifically, let $A = \set{\bot,1,\ldots,k}$: each letter is a color, and $\bot$ is uncolored.
Let $W = \set{w = w_0 w_1 \cdots \mid \exists i \in \set{1,\ldots,k}, \forall n, w_n \neq i}$, it is satisfied if at least one color is not seen along the play.
It is a safety objective. The figure~\ref{fig:generalized_safety} represents the partial order $(\Res(W),\subseteq)$ for $k = 3$.
The left quotients are all the strict subsets of $\set{1,\ldots,k}$.
The width of this partial order is $\binom{k}{\lfloor k/2 \rfloor}$, according to the well known Sperner's Lemma from combinatorics.

\begin{figure}[!ht]
\begin{center}
\begin{tikzpicture}[scale = 1]
\begin{scope}
\node[Eve] at (1,2)	 (eps)	{$\varepsilon$};
\node[Eve] at (4,4)	 (1)	    {$\set{1}$};
\node[Eve] at (4,2)	 (2)	    {$\set{2}$};
\node[Eve] at (4,0)	 (3) 	{$\set{3}$};
\node[Eve] at (8,4)	 (12)   {$\set{1,2}$};
\node[Eve] at (8,2)	 (13)   {$\set{1,3}$};
\node[Eve] at (8,0)	 (23) 	{$\set{2,3}$};
\end{scope}

\path[thick]
(eps) edge (1)
(eps) edge (2)
(eps) edge (3)

(1) edge (12)
(2) edge (12)
(1) edge (13)
(3) edge (13)
(2) edge (23)
(3) edge (23)

(eps) edge[bend left = 25, dotted] node[above] {$1$} (1)
(eps) edge[bend left = 25, dotted] node[above] {$2$} (2)
(eps) edge[bend right = 25, dotted] node[below] {$3$} (3)

(1) edge[bend left = 25, dotted] node[above] {$2$} (12)
(2) edge[bend left = 25, dotted] node[above] {$1$} (12)
(1) edge[bend right = 25, dotted] node[below] {$3$} (13)
(3) edge[bend left = 25, dotted] node[above] {$1$} (13)
(2) edge[bend right = 25, dotted] node[below] {$3$} (23)
(3) edge[bend right = 25, dotted] node[below] {$2$} (23)

(eps) edge[loop left, dotted] node[left] {$\bot$} () 

(1) edge[loop above, dotted] node[above] {$\bot,1$} () 
(2) edge[loop below, dotted] node[below] {$\bot,2$} () 
(3) edge[loop below, dotted] node[below] {$\bot,3$} () 

(12) edge[loop above, dotted] node[above] {$\bot,1,2$} () 
(13) edge[loop below, dotted] node[below] {$\bot,1,3$} () 
(23) edge[loop below, dotted] node[below] {$\bot,2,3$} () 
;
\end{tikzpicture}
\end{center}
\caption{The generalized safety objective.}
\label{fig:generalized_safety}
\end{figure}
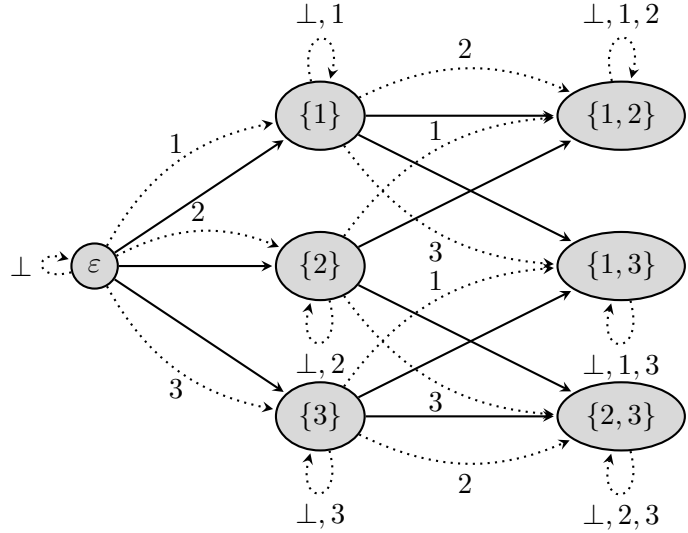

\begin{corC}[\cite{FijalkowHorn10,FijalkowHorn13}]
For all generalized safety games with $k$ colors, if Eve has a winning strategy,
then she has a winning strategy with $\binom{k}{\lfloor k/2 \rfloor}$ memory states.

Furthermore, for all $k$, there exists a generalized safety game with $k$ colors where Eve has a winning strategy using $\binom{k}{\lfloor k/2 \rfloor}$ memory states, but none using less memory states.
\end{corC}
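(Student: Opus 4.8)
The plan is to instantiate Theorem~\ref{thm:main} on the generalized safety objective $W$, so that the whole task reduces to computing the width of $(\Res(W),\subseteq)$ and checking that the hypothesis needed to apply the theorem is met. The upper and lower bounds of the corollary are then nothing more than the two directions (upper-bound lemma and lower-bound lemma) specialized to this $W$.

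First I would identify the left quotients. For a finite word $w$, whether $w\cdot\rho\in W$ depends on $w$ only through the set $S(w)\subseteq\set{1,\dots,k}$ of non-$\bot$ colours already occurring in $w$: a colour $i$ is globally avoided in $w\cdot\rho$ exactly when $i\notin S(w)$ and $i$ does not occur in $\rho$. Hence $w^{-1}W=\set{\rho \mid \exists i\notin S(w),\ i\text{ does not occur in }\rho}$, which depends only on $S(w)$ and is non-empty precisely when $S(w)\subsetneq\set{1,\dots,k}$ (any not-yet-seen colour can still be avoided; once all $k$ colours have appeared the quotient is empty and the game is lost). This yields a bijection between the non-empty left quotients and the strict subsets $S\subsetneq\set{1,\dots,k}$, matching Figure~\ref{fig:generalized_safety}.

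Next I would pin down the order. Writing $T(\rho)$ for the set of non-$\bot$ colours occurring in an infinite word $\rho$, one has $\rho\in w^{-1}W$ iff the complement of $S(w)$ is not contained in $T(\rho)$. Since $T(\rho)$ can realise any subset of $\set{1,\dots,k}$ (cycle forever through the desired colours, or play $\bot^\omega$ for the empty set), a direct check of the resulting implication shows that the quotient indexed by $S$ is included in the quotient indexed by $S'$ exactly when $S\supseteq S'$. Thus $(\Res(W),\subseteq)$ is anti-isomorphic to the poset of strict subsets of $\set{1,\dots,k}$ ordered by inclusion; reversing the order preserves antichains, hence the two posets have the same width.

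Finally I would compute this width and conclude. By Sperner's theorem the largest antichain in the Boolean lattice $2^{\set{1,\dots,k}}$ has size $\binom{k}{\lfloor k/2 \rfloor}$ and lies in a middle layer, which consists only of strict subsets; therefore deleting the top element $\set{1,\dots,k}$ leaves the width unchanged at $\binom{k}{\lfloor k/2 \rfloor}$. Because $W$ has finitely many left quotients it is regular, so $(\Res(W),\subseteq)$ is finite and a fortiori well-founded, and both clauses of Theorem~\ref{thm:main} apply: the upper bound gives a winning strategy with $\binom{k}{\lfloor k/2 \rfloor}$ memory states whenever Eve wins, and the lower bound provides a game witnessing that this many states are necessary. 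I expect no serious obstacle; the only two points demanding care are getting the direction of the inclusion order right and confirming that discarding the (lost) full-set quotient does not shrink the maximal antichain.
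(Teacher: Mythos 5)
Your proposal is correct and follows essentially the same route as the paper: identify the non-empty left quotients with the strict subsets of $\set{1,\ldots,k}$, observe that inclusion of quotients reverses inclusion of subsets (which preserves the width), invoke Sperner's theorem to get $\binom{k}{\lfloor k/2 \rfloor}$, and apply both directions of Theorem~\ref{thm:main} (the well-founded clause applying since $W$ is regular). Your added care about the order reversal and about the middle layer consisting only of strict subsets is a welcome elaboration of details the paper leaves implicit.
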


\subsection*{Games with Counters}

This example originates from the theory of regular cost functions~\cite{Colcombet13}.
Let $N \in \N$, and define the boundedness objective $W_N$ involving a counter as follows:
$$W_N = \set{w \mid \textrm{the counter value in } w \textrm{ remains bounded by } N}\ .$$
For the set of actions, we consider any monotonic action (even non-regular), \textit{i.e.} function $f : \N \to \N$
such that if $i \le j$ then $f(i) \le f(j)$, as for instance:
\begin{itemize}
	\item leaving the counter value unchanged,
	\item incrementing the counter value by one,
	\item resetting the counter value to zero,
	\item dividing the counter value by two, rounded down,
	\item increasing the counter value to the next power of two.
\end{itemize}
The objective $W_N$ is a regular safety objective.

\begin{corC}[\cite{Colcombet13}]
The boundedness games are half-positional.
\end{corC}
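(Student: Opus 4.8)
The plan is to apply Theorem~\ref{thm:main} to the boundedness objective $W_N$. The corollary claims that boundedness games are half-positional, i.e. $\mem(W_N) = 1$. By the corollary preceding the generalized safety example, a safety objective $W$ is half-positional (assuming $(\Res(W),\subseteq)$ is well-founded, which holds here since $W_N$ is regular) exactly when inclusion is a \emph{linear order} on $\Res(W_N)$. So the entire task reduces to proving that the nonempty left quotients of $W_N$ form a chain under $\subseteq$.

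\textbf{Describing the left quotients.} First I would pin down what a left quotient $w^{-1}W_N$ records. Since each action is a monotonic function $f:\N\to\N$ applied to the current counter value, a finite word $w = f_1 f_2 \cdots f_m$ transforms the initial counter value $0$ into some value $\mathrm{val}(w) = (f_m \circ \cdots \circ f_1)(0)$, provided no intermediate value ever exceeded $N$ (otherwise $w$ is a bad prefix and $w^{-1}W_N = \emptyset$). The key observation is that, because all actions are monotonic and the objective only constrains future counter values against the fixed bound $N$, the set of infinite continuations $w^{-1}W_N$ that keep the counter bounded by $N$ depends on $w$ \emph{only through} the current counter value $\mathrm{val}(w)$. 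Thus every nonempty left quotient is determined by a single integer in $\set{0,1,\ldots,N}$, and I would write $W_N^{(n)}$ for the quotient associated with current value $n$.

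\textbf{Proving the chain property.} I would then show $W_N^{(n')} \subseteq W_N^{(n)}$ whenever $n \le n'$. Fix an infinite continuation $\rho = g_1 g_2 \cdots$ and suppose it is safe starting from value $n'$, meaning every prefix keeps the value at most $N$. Because each $g_i$ is monotonic, starting from the smaller value $n \le n'$ produces counter values that are pointwise $\le$ those obtained from $n'$ at every step (a straightforward induction: if $v_i \le v_i'$ then $g_{i+1}(v_i) \le g_{i+1}(v_i')$). Hence if the run from $n'$ stays bounded by $N$, so does the run from $n$, giving $\rho \in W_N^{(n)}$. Therefore the left quotients are linearly ordered by $\mathrm{val}$, and inclusion is a linear order on $\Res(W_N)$. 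Combining with the half-positionality characterization yields $\mem(W_N)=1$.

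\textbf{Main obstacle.} The only delicate point is the claim that $w^{-1}W_N$ is a function of $\mathrm{val}(w)$ alone, and that the inclusions are determined entirely by the ordering of these values. This requires the monotonicity hypothesis on actions in an essential way — without it, two words reaching the same counter value could still admit different safe futures, or reaching a higher value need not be worse. I expect the monotonicity-based pointwise domination argument to carry this through cleanly, but it is the step where the specific structure of $W_N$ (as opposed to a generic regular safety objective) is actually used, so it deserves the most care.
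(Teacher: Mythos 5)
Your proposal is correct and follows essentially the same route as the paper: the paper's (largely implicit) argument is exactly that monotonic actions make the nonempty left quotients totally ordered by inclusion — it states this explicitly in the energy-games example and relies on the regularity of $W_N$ for well-foundedness — so that Theorem~\ref{thm:main} gives width $1$ and hence half-positionality. Your write-up merely fills in the details (left quotients determined by the current counter value, pointwise domination via monotonicity) that the paper leaves to the reader.
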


\section{Related Works}
\label{sec:related_works}
The study of memory requirements in games has a long history.

\textbf{Before 2014 -- the publication of the conference version of this paper.}
The first general result in this direction is about Muller objectives: in~\cite{DziembowskiJurdzinskiWalukiewicz97},
the authors show how to compute the exact memory requirements by looking at the so-called Zielonka tree~\cite{Zielonka98}.
This is orthogonal to our results, as the Muller objectives only specify the limit behaviour 
(what is seen infinitely often), whereas we consider here only the behaviours in the finite.
Gimbert and Zielonka~\cite{GZ05} (see also~\cite{Gimbert07}) gave a characterization of all payoff functions (extending objectives to a quantitative setting) for which both players have memoryless optimal strategies over finite games.

Eryk Kopczy{\'n}ski investigated memory requirements for $\omega$-regular objectives~\cite{Kopczynski09}, asking the following question: given a $\omega$-regular objective, can we compute its memory requirement? This fascinating question that we call the Kopczy{\'n}ski programme remains open to this day. One of the many contributions of Kopczy{\'n}ski in this line of work was to introduce the notion of \textit{chromatic memory}: the memory structure only considers the sequence of colours, and not the sequence of edges.
Kopczy{\'n}ski constructed an algorithm to compute the \textit{chromatic} memory requirements of an $\omega$-regular objective, and conjectured that this coincides with general memory requirements.

\textbf{After 2014.}
Casares~\cite{Cas22} has disproved Kopczy{\'n}ski's conjecture, and in particular has shown that chromatic memory requirements are hard to compute: for Muller objectives, deciding whether there is a memory structure of size $k$ becomes NP-complete and equivalent to minimizing transition-based deterministic Rabin automata.
In this direction, Casares, Colcombet and Lehtinen~\cite{CCL22} showed that computing general memory requirements for Muller objectives is equivalent to minimizing good-for-games automata.
A result by Bouyer, Randour, and Vandenhove~\cite{BRV22} provides a link between the chromatic memory requirements of all $\omega$-regular objectives (not only Muller objectives) and their representation as transition-based deterministic parity automata, but with less tight bounds on the minimal memory structures.
Bouyer et al.~\cite{BLORV22} extended the results of Gimbert and Zielonka~\cite{GZ05} to chromatic finite memory.

A promising line of work extending the approach initiated in this paper uses the notion of universal graphs, initially introduced for understanding algorithms solving games~\cite{ColcombetFGO22}, and later used by Ohlmann for giving a characterisation of all positionally determined objectives~\cite{OhlmannThesis,Ohlmann22,Ohlmann23}. The proof of the characterisation is indeed very close to the techniques we present here: indeed universal graphs form a generic way of reducing to safety games. The characterisation has then been extended to finite memory by Casares and Ohlmann~\cite{CO23}.

The subsequent works most related to the present results are by Bouyer, Casares, Randour and Vandenhove~\cite{BouyerCRV22}, and by Bouyer, Fijalkow, Randour, and Vandenhove~\cite{BFRV23}. The first give a characterization of half-positional objectives recognized by deterministic B{\"{u}}chi automata, and the second characterizations of chromatic memory requirements for open and closed objectives, as well as complexity-theoretic results about computing these requirements. Both are technically rooted in the very ideas developed in this paper. In particular, Dilworth's theorem and the use of covering chains (developed in Lemma~\ref{lem:upper_bound_tight_finite_degree}) is at the heart of the characterization given in~\cite{BFRV23}.

Technically unrelated but close in spirit, the article~\cite{BLT22} establishes the existence of finite-memory optimal strategies from topological properties of objectives. There are major differences with our work: their framework is different (they study \emph{concurrent} games that are not played \emph{on graphs}), and their aim is to establish the existence of finite-memory optimal strategies for many objectives, but not to understand precisely the memory requirements of some class of objectives.

\section*{Conclusion and Perspectives}

We considered general safety objectives and characterized their memory requirements. Specifically, the memory requirements of a safety objective $W$ is the width of the partially ordered set $(\Res(W),\subseteq)$. This is the first general result characterizing the memory requirements for \textit{some} non-regular objectives, based on their topological properties. Back in 2014, we hoped that this would be a stepping stone for obtaining memory requirements characterizations for other classes of objectives.
As we have discussed in the related works section this hope has materialised, with many results along this line.
We now only hope that there will be more: the long-term goal is indeed to characterise memory requirements for all $\omega$-regular objectives, a research programme started by Kopczy{\'n}ski around 2006~\cite{Kopczynski09} and far from over, and even more active than ever.

\bibliographystyle{alphaurl}
\bibliography{bib}

\end{document}